\documentclass[letterpaper,openany]{article} 
\usepackage{aaai23}  
\usepackage{amsfonts}

\usepackage{times}  
\usepackage{helvet}  
\usepackage{courier}  
\usepackage[hyphens]{url}  
\usepackage{graphicx} 
\urlstyle{rm} 
\usepackage{natbib}  
\usepackage{caption} 
\usepackage{tcolorbox}
\frenchspacing  
\setlength{\pdfpagewidth}{8.5in} 
\setlength{\pdfpageheight}{11in} 
%
\usepackage{algorithm}
\usepackage{algorithmic}
\usepackage{bm}
\usepackage{amssymb}
\usepackage{amsthm}
\usepackage{amsmath}
\usepackage{wrapfig}
\usepackage{caption}
\usepackage{subcaption}
\usepackage{fontenc}
\usepackage{inputenc}
\usepackage{graphicx}
\usepackage{caption,booktabs}
\usepackage{float}
\usepackage{tikz}
\usepackage{mathdots}
\usepackage{yhmath}
\usepackage{cancel}
\usepackage{siunitx}
\usepackage{array}
\usepackage{wrapfig}
\usepackage{enumitem}
\usepackage{multirow}
\usepackage{mathtools}
\usepackage{textcomp}
\usepackage{gensymb}
\usepackage{tabularx}
\usepackage{dsfont}
\usepackage{subcaption}
\usepackage{tikzscale}
\usepackage{colortbl}

\newcolumntype{P}[1]{>{\centering\arraybackslash}p{#1}}

\renewcommand{\qedsymbol}{$\blacksquare$}
 
\newtheorem{theorem}{Theorem}
\newtheorem{lemma}{Lemma}
 
\newtheorem{remark}{Remark}

\newtheorem{definition}{Definition}
\newtheorem{problem}{Problem}

\DeclareMathOperator*{\argmax}{arg\,max}

\csname @openrightfalse\endcsname
%
\usepackage{newfloat}
\usepackage{listings}
\DeclareCaptionStyle{ruled}{labelfont=normalfont,labelsep=colon,strut=off} 
\lstset{%
	basicstyle={\footnotesize\ttfamily},
	numbers=left,numberstyle=\footnotesize,xleftmargin=2em,
	aboveskip=0pt,belowskip=0pt,%
	showstringspaces=false,tabsize=2,breaklines=true}
\floatstyle{ruled}
\newfloat{listing}{tb}{lst}{}
\floatname{listing}{Listing}
%
\pdfinfo{
/TemplateVersion (2023.1)
}

\setcounter{secnumdepth}{1} 

%


\title{Does it pay to optimize AUC?}
\author{
    Baojian Zhou\textsuperscript{\rm 1},
    Steven Skiena\textsuperscript{\rm 2}
}
\affiliations{
    \textsuperscript{\rm 1}Fudan University, Shanghai, \textsuperscript{\rm 2}Stony Brook University, New York\\
    bjzhou@fudan.edu.cn, skiena@cs.stonybrook.edu
%
}

\usepackage{bibentry}

\begin{document}

\maketitle

\begin{abstract}

The Area Under the ROC Curve (AUC) is an important model metric for evaluating binary classifiers, and many algorithms have been proposed to optimize AUC approximately. It raises the question of whether the generally insignificant gains observed by previous studies are due to inherent limitations of the metric or the inadequate quality of optimization.

To better understand the value of optimizing for AUC, we present an efficient algorithm, namely AUC-opt, to find the provably optimal AUC linear classifier in $\mathbb{R}^2$, which runs in $\mathcal{O}(n_+ n_- \log (n_+ n_-))$ where $n_+$ and $n_-$ are the number of positive and negative samples respectively. Furthermore, it can be naturally extended to $\mathbb{R}^d$ in $\mathcal{O}((n_+n_-)^{d-1}\log (n_+n_-))$ by calling AUC-opt in lower-dimensional spaces recursively. We prove the problem is NP-complete when $d$ is not fixed, reducing from the \textit{open hemisphere problem}.

Experiments show that compared with other methods, AUC-opt achieves statistically significant improvements on between 17 to 40 in $\mathbb{R}^2$ and between 4 to 42 in $\mathbb{R}^3$ of 50 t-SNE training datasets. However, generally the gain proves insignificant on most testing datasets compared to the best standard classifiers. Similar observations are found for nonlinear AUC methods under real-world datasets.
\end{abstract}

\section{Introduction}

The Area Under the ROC  Curve (AUC)~\citep{hanley1982meaning} is an important model evaluation metric that can be applied to a wide range of learning tasks such as binary classification~\citep{bradley1997use}, bipartite ranking~\citep{freund2003efficient}, and recently fairness learning~\citep{kallus2019fairness,vogel2021learning}. It is a generally more reliable quality measure than the accuracy when the dataset is highly imbalanced, which often is the case in real-world problems. Multiple studies~\citep{cortes2004auc,joachims2005support} argue that optimizing classifiers for AUC may result in better classifiers than minimizing error rates.

A wide variety of algorithms \citep{provost2001robust,ferri2002learning,yan2003optimizing,cortes2004auc,rakotomamonjy2004optimizing,herschtal2004optimising,brefeld2005auc,joachims2005support,calders2007efficient,qin2010general,le2010optimization,zhao2011online,gao2013one,ying2016stochastic,eban2017scalable,liu2020stochastic} have been proposed to optimize AUC approximately under different learning settings. Typically, these methods relax the original \textit{nonconvex nondifferentiable} objective to either convex or differentiable. Despite these advances, there exists no strong evidence that these algorithms generally perform better than standard classifiers. Empirical observations \citep{rakotomamonjy2004optimizing,joachims2005support} from previous indicate generally minor and statistically insignificant gains on particular datasets. This vagueness makes the question whether the observed results are due to the inherent limitations of the metric or the inadequate quality of optimization.

To better understand the virtues of optimizing for AUC, we investigate it from both \textit{computational} and \textit{algorithmic} viewpoints. Although AUC optimization is often reported to be NP-hard for linear hypothesis classes \citep{gao2013one,gao2015consistency,gultekin2020mba}, we show that it is polynomial-time solvable if the data dimension $d$ is fixed. We also prove that it is NP-complete if $d$ is not fixed in advance. The key idea of our proof is a reduction from the \textit{open hemisphere problem} \citep{johnson1978densest}. 

With the hope of polynomial-time solvable of linear classifiers, we present an efficient algorithm, namely AUC-opt, that can provably optimize AUC in $\mathbb{R}^2$. A key observation is that given any $n$ training samples on the plane, the number of ``interesting'' classifiers is at most $\mathcal{O}\left(n_+ n_-\right)$ where $n_+$ and $n_-$ are the number of positive and negative samples respectively. Inspired by the idea of the topological sweep \citep{edelsbrunner1989topologically}, we calculate the AUC for the ``minimal'' slope once and then iterate through all other slopes by an ascending order using only constant update-time per iteration, yielding an $\mathcal{O}(n_+ n_-\log n)$ algorithm. Furthermore, our method can be naturally extended to $\mathbb{R}^d$ in $\mathcal{O}((n_+ n_-)^{d-1}\log (n))$ by calling AUC-opt in low-dimensional spaces recursively. This algorithm as an exponential depends on 
$d$ and hence will be impractical on large real-world datasets where samples are usually high-dimensional. \textbf{Our goal here is not a general-purpose algorithm but to address whether the observed limitations of previous AUC optimizers result from inadequate optimization of a convex function or are an inherent result of the AUC objective criteria.} Doing such experiments requires the exact optimization of AUC-opt, even if we are limited to small data sets in low dimensions.

\begin{figure}
\centering
\includegraphics[width=.42\textwidth]{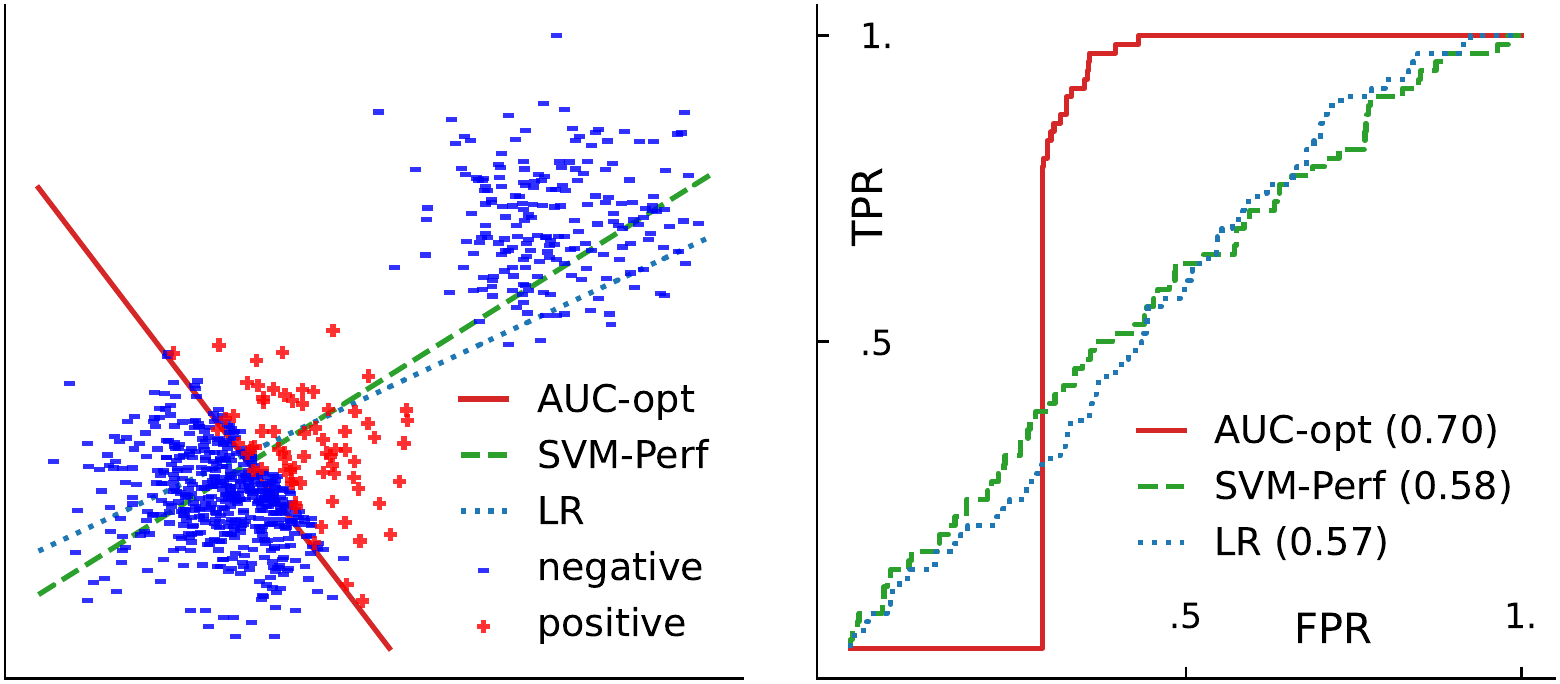}
\caption{The popular AUC classifier SVM-Perf \citep{joachims2005support} fails to find a decent AUC separator on an adversarial example (left), performing similar to Logistic Regression (LR). The corresponding ROC curves and AUC scores (right) for these and our AUC-opt, which beats SVM-Perf and LR by a large margin.\vspace{-6mm}}
\label{fig:motivation-example}
\end{figure}

Fig. \ref{fig:motivation-example} presents a toy example as an illustration where there are significant improvements. To further validate AUC-opt, we conduct experiments on 50 real-world datasets projected onto both $\mathbb{R}^2$ and $\mathbb{R}^3$ by using t-SNE \citep{maaten2008visualizing}. Experiments comparing AUC-opt against seven linear classifiers show that AUC-opt achieves statistically significant improvements on between 17 to 40 of 50 t-SNE datasets at the training stage. To summarize, our main contributions are:

\begin{itemize}
\item  For the first time, we prove the linear AUC optimization problem is NP-complete when $n$ and $d$ are arbitrary but polynomial-time solvable for fixed $d$. A key to our proof is a reduction of the open hemisphere problem. Although NP-hard of the problem was often reported, we have not identified proof in the literature.

\item We then present AUC-opt, to find the provably optimal AUC linear classifier in $\mathbb{R}^2$. It runs in $\mathcal{O}(n_+ n_- \log (n_+ n_-))$, which is optimal under the \textit{algebraic computation tree model} \citep{ben1983lower}. It can be naturally extended to $\mathbb{R}^d$ in $\mathcal{O}( (n_+ n_-)^{d-1} \log (n_+ n_-))$ by decomposing original problem into same subproblems of lower dimensional spaces and calling AUC-opt recursively.

\item Experiments comparing AUC-opt against seven other classification methods show that AUC-opt achieves significant improvements on between 17 to 40 in $\mathbb{R}^2$ and on between 4 to 42 in $\mathbb{R}^3$ of 50 t-SNE training datasets. But generally, the gain proves insignificant on most testing datasets compared to the best standard classifiers. Empirical results suggest that approximate AUC classifiers have space to improve.

\item Similarly, empirical findings on nonlinear classifiers further suggest that the partial loss of significance of approximate AUC optimizers may be due to imperfect approximation, thus having space to improve current approximate algorithms.
\end{itemize}

\subsection{Related Work}

\noindent\textbf{AUC optimization.} In the seminal work, \citet{cortes2004auc} shows that AUC is monotonically increasing with respect to the accuracy and is equal to the accuracy when $n_+ = n_-$. Yet, because the variance is not zero, optimizing directly for AUC may still yield better AUC values than that of standard classifiers. Many AUC optimization methods have been proposed over past years~\citep{yan2003optimizing,herschtal2004optimising,brefeld2005auc,joachims2005support,rakotomamonjy2004optimizing,calders2007efficient,ying2016stochastic,yang2022auc}. These approaches all focus on approximation due to the nonconvex and nondifferentiable of the AUC objective. To avoid this,  \citet{yan2003optimizing} propose to replace the 0-1 objective by a sum of differentiable sigmoid so that a gradient descent-based method can be applied. \citet{joachims2005support} relaxes the problem to a convex one so that SVM can be used (see also \citep{rakotomamonjy2004optimizing,brefeld2005auc}).  A study of \citet{rakotomamonjy2004optimizing} indicates that optimizing SVM objective is also attend to optimize AUC~\citep{rakotomamonjy2004optimizing,steck2007hinge}. More recently, methods for optimizing AUC are studied under the online learning setting \citep{zhao2011online,gao2013one,ying2016stochastic,liu2018fast}. However, performance gains are insignificant found in these studies, and there is a lack of comparison between AUC optimizers and standard methods. Different from these previous works, our goal is to optimize AUC score without approximation.

\noindent\textbf{Bipartite ranking.} The AUC optimization is closely related with the bipartite ranking problem \citep{freund2003efficient,le2010optimization,qin2010general,rudin2018direct} where minimizing the pairwise misranking error is equivalent to maximize the AUC score. For example, RankBoost \citep{freund2003efficient}, a popular ranking algorithm, implicitly optimizes AUC as proved in \cite{cortes2004auc}. \citet{kotlowski2011bipartite} consider maximizing AUC as a minimization of the rank loss. Recently, \citet{rudin2018direct} propose to directly optimize rank statistics by using mixed-integer programming. More works can be found in \citet{krishna2016aditya} and references therein.

\noindent\textbf{Computational complexity results.} Although NP-hard of the problem is often cited as folklore \citep{gao2013one,gao2015consistency,gultekin2020mba}, we have not identified proof in the literature.  Several NP-hardness results have previously been shown for both classifications of 0-1 loss and ranking \citep{feldman2012agnostic,ben2003difficulty,order1999}. \citet{order1999} show that finding the optimal ranking is NP-complete. Although \citet{joachims2005support} shows that the AUC optimization can be reformulated as a classification problem, a proof of NP-hardness from it does not seem to follow naturally.  Instead, we prove the NP-hardness by the reduction from the open hemisphere problem.

\subsection{Paper outline and notations}

The remainder of this paper is organized as follows. We first present preliminaries in \S\ref{section:problem-formulation}. The proof of NP-hardness of AUC optimization is given in \S\ref{section:np-complete}. AUC-opt and its generalization to high-dimensional space are given in \S\ref{section:algorithms}. We empirically evaluate AUC-opt and then make our conclusion in \S\ref{section:experiments} and \S\ref{section:conclusion}, respectively. Throughout this paper, we restrict our attention to optimize AUC in a linear hypothesis class $\mathcal{H}$ i.e. $\bm f \in \mathcal{H}:=\{\bm w: \bm w \in \mathbb{R}^d \}$. Given a set of $n$ training examples $\mathcal{D}:=\{(\bm x_i, y_i) : i \in \left\{1,2,\ldots,n\right\}\}$ where $\bm x_i \in \mathbb{R}^d$ and $y_i \in \{\pm 1\}$, we rewrite $\mathcal{D}$ as a union of $\mathcal{D}_+$ and $\mathcal{D}_-$ where $\mathcal{D}_+$ is the set of positive samples written as $\{(\bm x_1^+, y_1^+),\ldots,(\bm x_{n_+}^+, y_{n_+}^+)\}$ and $\mathcal{D}_-$ is the set of negative samples written as $\{(\bm x_1^-, y_1^-),\ldots,(\bm x_{n_-}^-, y_{n_-}^-)\}$, respectively. Clearly, $n = n_+ + n_-$ and $\mathcal{D} = \mathcal{D}_+ \cup \mathcal{D}_-$. $x_{i j}$ is denoted as $j$-th entry of vector $\bm x_i$, i.e. $\bm x_i = [x_{i1},x_{i1},\ldots,x_{id}]^\top$.

\section{Preliminaries}
\label{section:problem-formulation}

We first review the definition of AUC statistic and give the problem formulation under linear hypothesis $\mathcal{H}$. We discuss that the problem is efficiently solvable when $\mathcal{D}$ is separable.

\begin{definition}[AUC Statistic \citep{clemenccon2008ranking}]
Let $(X,Y)$ and $(X^\prime, Y^\prime)$ be two pairs of random variables in $\mathbb{R}^d\times \{\pm 1\}$ following the same unknown distribution. Denote the probability of an event $A$ condition on $\{Y=1,Y^\prime=-1\}$ as $\mathbb{P}\{A | Y=1,Y^\prime=-1\}$. Given a score function $\bm f: \mathbb{R}^d \mapsto \mathbb{R}$, the AUC statistic is 
\begin{equation}
\operatorname{AUC}(\bm f) := \mathbb{P}\{ \bm f(X) > \bm f(X^\prime) | Y=1,Y^\prime=-1\}. \nonumber
\end{equation}
\end{definition}
Statistically, $\operatorname{AUC}(\bm f)$ is the probability that a randomly chosen positive sample is ranked by $\bm f$ higher than a randomly chosen negative one (tie-breaks lexicographically). It is equivalent to the Wilcoxon statistic \citep{hanley1982meaning}. Given $\mathcal{D}$, our linear AUC optimization is empirically defined as the following.

\begin{problem}[Linear AUC Optimization (LAO)]
\label{problem:lao-problem}
Given the dataset $\mathcal{D}$ and the hypothesis class $\mathcal{H}:=\{\bm w: \bm w \in \mathbb{R}^d\}$, the LAO problem is to find a $\bm w\in \mathcal{H}$ such that the empirical AUC score is maximized, that is
\begin{small}
\begin{equation}
(\textit{LAO}) \quad \bm w^\star \in \argmax_{\bm w \in \mathcal{H}} \sum_{i=1}^{n_+} \sum_{j=1}^{n_-} \frac{{\bm 1} [\bm w^\top \bm x_i^{+}> \bm w^\top \bm x_j^{-} ]}{n_+ n_-}, \label{equ:linear-auc-posi}
\end{equation}
\end{small}
where the indicator ${\bm 1}[A]=1$ if $A$ is true 0 otherwise.
\end{problem}

Due to the non-convexity of 0-1 loss in LAO, directly optimizing (\ref{equ:linear-auc-posi}) is challenging. Notice that $\bm w^\star$ is not unique as $p + \alpha \bm w^\star$ with $\alpha > 0$ and $p \in \mathbb{R}$ is always an optimizer. Although (\ref{equ:linear-auc-posi}) is hard to optimize, if $\mathcal{D}$ is \textit{linearly separable}, that is, there exists a $\bm w$ such that for any $(\bm x_{i}^+,y_{i}^+) \in \mathcal{D}_+$, $\left\langle \bm w, \bm x_{i}^+ \right\rangle \geq 0$ and for any $(\bm x_{j}^-,y_{j}^-) \in \mathcal{D}_-$, $\left\langle\bm w, \bm x_{j}^- \right\rangle < 0$, then one can always find a  $\bm w$ such that $\operatorname{AUC}(\bm w)=1$ in polynomial-time \citep{elizondo2006linear,linear2007} by using Perceptron \citep{freund1999large} or linear programming techniques. For example, the worst time complexity of an iterative reduction algorithm is $\mathcal{O}(n r^3)$ where $r \leq \min(n,d+1)$ \citep{elizondo2006linear}. In the rest of this paper, we assume $\mathcal{D}$ is not linearly separable.

Although previous studies have claimed the NP-hardness of LAO \citep{gao2013one,gao2015consistency,gultekin2020mba}, no previous literature proves it even for the linear classifier case. It motivates us to prove the NP-hardness under the linear hypothesis.

\section{NP-hardness of LAO}
\label{section:np-complete}

This section proves the LAO problem under the linear hypothesis is NP-complete if $n$ and $d$ are not fixed but polynomial-time solvable when $d$ is fixed. We first introduce the open hemisphere problem and then prove the NP-complete by a reduction from it.

\begin{definition}[Open hemisphere]
Given the unit sphere $\mathcal{S}^{d-1}:=\{\bm s \in \mathbb{R}^d: \|\bm s\|_2 = 1\}$, the open hemisphere of $\bm w$ is defined as a set $\{\bm s \in \mathcal{S}^{d-1}: \left\langle \bm w, \bm s \right\rangle > 0 \}$.
\end{definition}

\begin{problem}[Open hemisphere problem~\citep{johnson1978densest}]
Let $\mathcal{K}:=\{\bm s_1,\bm s_2,\ldots, \bm s_t\}$ be a subset of $\mathbb{Q}^d \cap \mathcal{S}^{d-1}$ where $\mathbb{Q}$ is the set of rationals. The open hemisphere problem is to find an open hemisphere such that it contains a largest subset of $\mathcal{K}$, that is,
\begin{equation}
\argmax_{\bm w \in \mathbb{R}^d} |\{\bm s_i \in \mathcal{K}: \left\langle \bm w,\bm s_i \right\rangle > 0 \}|. \nonumber
\end{equation}
\end{problem}
To ease our analysis, we formulate the open hemisphere problem as a feasibility problem. Given positive integers $d$, $m$ and a set $\mathcal{K}$, does there exist a hyperplane $\bm w$ such that at least $m$ inequalities are satisfied, that is,
\begin{equation}
(\textit{OH})\qquad |\{\bm s_i \in \mathcal{K}: \left\langle \bm w,\bm s_i \right\rangle > 0 \}| \geq m \ ? \label{equ:open-hemisphere}
\end{equation}

\begin{lemma}[NP-complete of OH~\citep{johnson1978densest}]
\label{lemma:johnson-np-hard}
Given positive $d$, $n$, $m$, and $\mathcal{K}\subseteq \mathbb{Q}^d\cap \mathcal{S}^{d-1}$, the feasibility of OH problem defined in (\ref{equ:open-hemisphere}) is NP-complete when both $n$ and $d$ are not fixed.
\end{lemma}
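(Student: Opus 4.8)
The plan is to prove the two halves of NP-completeness separately: that OH lies in NP, and that it is NP-hard via a polynomial reduction from \textsc{Max-2-Sat}. I would follow the overall strategy of \citet{johnson1978densest} but supply the encoding explicitly.

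\emph{Membership.} A direct certificate $\bm w$ is awkward, since an optimal separating vector could a priori require many bits. Instead I would take as certificate the subset $T\subseteq\mathcal{K}$ with $|T|\ge m$ that is claimed to lie strictly inside the hemisphere. Verifying the certificate amounts to testing feasibility of the homogeneous strict system $\langle\bm w,\bm s_i\rangle>0$ for all $\bm s_i\in T$, which I would cast as the linear program: maximize $\epsilon$ subject to $\langle\bm w,\bm s_i\rangle\ge\epsilon$ for $\bm s_i\in T$ and $\|\bm w\|_\infty\le 1$, accepting iff the optimum is strictly positive (rescaling then gives a strictly feasible $\bm w$). Since linear programs are solvable in polynomial time, the check is polynomial and OH $\in$ NP.

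\emph{Hardness.} Given a \textsc{Max-2-Sat} instance on variables $x_1,\dots,x_N$ with clauses $C_1,\dots,C_M$ and target $k$, I would use coordinates $w_0,w_1,\dots,w_N$, reading $x_i$ as \emph{true} iff $w_i>0$ and using $w_0$ as a bias forced positive. Each clause is encoded by a single point: e.g. $(x_i\vee x_j)$ becomes the direction $\bm e_0+\bm e_i+\bm e_j$, and a negated literal flips the corresponding sign. Under a genuine Boolean vector with $w_0=1$ and each $w_i=\pm1$, one checks $\langle\bm w,\bm e_0+\bm e_i+\bm e_j\rangle=1+w_i+w_j>0$ exactly when at least one literal is true, so the number of clause-points in the open hemisphere equals the number of satisfied clauses. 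To force $\bm w$ to be (essentially) Boolean I would add \emph{consistency} points in high multiplicity that penalize $w_0\le0$ and pin each $|w_i|$ to a common scale, and set the threshold $m$ to the (always attainable) number of consistency points plus $k$.

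The main obstacle is exactly the gap between the continuous vector $\bm w$ and a discrete $\pm1$ assignment: with strict inequalities nothing forces $w_i\in\{\pm1\}$, and mismatched magnitudes create false positives — for instance two slightly negative $w_i,w_j$ can still satisfy $1+w_i+w_j>0$ although the clause is unsatisfied. The crux of the proof is therefore a rounding lemma showing that the consistency gadget makes every near-optimal $\bm w$ roundable to a Boolean assignment without decreasing the clause count, so that the optimum of OH is an increasing affine function of the \textsc{Max-2-Sat} optimum. A secondary technicality is that $\mathcal{K}$ must consist of \emph{rational} unit vectors, whereas normalizing an integer direction seldom yields a rational point of $\mathcal{S}^{d-1}$; I would handle this by appending a padding coordinate chosen so the integer squared norm becomes a perfect square, or by perturbing each direction to a nearby rational unit vector (which exist densely on $\mathcal{S}^{d-1}$) by an amount small enough to preserve every strict inequality. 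Since the construction has $\mathcal{O}(N+M)$ points in dimension $\mathcal{O}(N)$ and is clearly polynomial, combining it with membership yields NP-completeness of OH when both $n$ and $d$ are part of the input.
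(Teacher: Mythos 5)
The first thing to note is that the paper contains no proof of this lemma: it is imported as a known result from \citet{johnson1978densest}, and the paper's own contribution (Theorem~\ref{thm:np-complete-lao}) merely builds a reduction on top of it. So there is no in-paper argument to compare against; what you have written is an attempt to reprove Johnson and Preparata's theorem itself. Your membership half is essentially correct: using the subset $T$ as the certificate and testing strict homogeneous feasibility by the LP ``maximize $\epsilon$ subject to $\langle \bm w, \bm s_i\rangle \ge \epsilon$ for $\bm s_i \in T$, $\|\bm w\|_\infty \le 1$'' is a valid polynomial-time verification (alternatively one can argue a strictly feasible $\bm w$ of polynomial bit-length exists and certify with $\bm w$ directly).

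The hardness half, however, has its crux asserted rather than proven, and that crux is the entire difficulty of the theorem. Everything is delegated to the unspecified ``consistency gadget'' and the unproven ``rounding lemma,'' and you yourself identify why they are indispensable: with strict homogeneous inequalities nothing forces $w_i \in \{\pm 1\}$, and two slightly negative coordinates still give $1 + w_i + w_j > 0$. Note what the gadget must actually do. Since the feasible set is invariant under positive rescaling of $\bm w$, ``pinning $|w_i|$ to a common scale'' can only mean enforcing relative bounds of the form $w_0/\mu < |w_i| < w_0/\lambda$. The upper bound is an intersection of halfspaces (points proportional to $\bm e_0 \pm \lambda' \bm e_i$), but the lower bound is a \emph{disjunction} ($w_i > w_0/\mu$ or $w_i < -w_0/\mu$), which cannot be imposed by halfspace constraints at all; it can only be \emph{rewarded} in the count, via a pair of points of which at most one can ever lie in any hemisphere, replicated with multiplicity large enough that any $\bm w$ reaching the threshold must collect one point from every pair. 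One must then verify that under exactly these magnitude constraints the clause points count precisely the satisfied clauses, and handle the fact that $\mathcal{K}$ is a set while the gadget needs multiplicities. None of this appears in your write-up, and it is precisely the content of the Johnson--Preparata argument. A second genuine gap is rationality on $\mathcal{S}^{d-1}$: your padding fix appends a \emph{different} extra coordinate to each point, which changes the problem (a hemisphere solution of the padded instance whose last coordinate of $\bm w$ is nonzero need not project to a solution of the original), and your perturbation fix requires a quantitative robustness statement, since an unsatisfied clause only yields $\langle \bm w, \bm s\rangle < 0$ with a margin depending on $\bm w$, so the perturbation must be bounded against the margins of some canonical polynomial-bit-size optimizer in \emph{both} directions of the reduction. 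As it stands the proposal is a plausible plan in the spirit of the original proof, but not a proof.
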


The above lemma shows the fact that OH is NP-complete. Based on this lemma, we have the following main theorem.

\begin{theorem}[NP-complete of LAO]
Consider the linear AUC optimization problem defined in Problem \ref{problem:lao-problem}, if $\mathcal{D}$ is not linear separable, LAO is NP-complete when both $n$ and $d$ are arbitrary.\label{thm:np-complete-lao}
\end{theorem}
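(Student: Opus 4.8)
The plan is to establish NP-completeness in two stages: membership in NP for the decision version of LAO, then NP-hardness by a reduction from the open hemisphere problem (OH), invoking Lemma \ref{lemma:johnson-np-hard}. The decision version I would work with is: given a rational dataset $\mathcal{D}$ and an integer $k$, decide whether there exists $\bm w \in \mathbb{R}^d$ with $\sum_{i=1}^{n_+}\sum_{j=1}^{n_-} {\bm 1}[\bm w^\top \bm x_i^+ > \bm w^\top \bm x_j^-] \geq k$. Note that every summand depends only on the sign of $\bm w^\top(\bm x_i^+ - \bm x_j^-)$, so the whole objective is governed by the pairwise difference vectors $\bm x_i^+ - \bm x_j^-$; this observation drives both directions of the argument.

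For membership in NP, the natural certificate is a weight vector $\bm w$, verifiable in polynomial time by counting the satisfied strict inequalities. The point needing care is that some optimal $\bm w$ admits an encoding of size polynomial in the input. I would argue this via the central hyperplane arrangement in weight space formed by the $n_+ n_-$ hyperplanes $H_{ij} := \{\bm w : \bm w^\top(\bm x_i^+ - \bm x_j^-) = 0\}$: on the interior of each cell of this arrangement every sign pattern $\operatorname{sign}(\bm w^\top(\bm x_i^+ - \bm x_j^-))$ is constant, hence the objective is constant per cell and its maximum is attained on some cell. Standard bounds on arrangements of rational hyperplanes guarantee each full-dimensional cell contains a rational representative whose bit-length is polynomial in $d$, $n_+ n_-$, and the bit-length of $\mathcal{D}$, which yields a polynomial-size certificate.

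For NP-hardness I would reduce OH to LAO. Given an OH instance $(\mathcal{K} = \{\bm s_1,\ldots,\bm s_t\}, m)$ with $\bm s_i \in \mathbb{Q}^d \cap \mathcal{S}^{d-1}$, I construct a LAO instance in the same dimension $d$ by taking the positive set $\mathcal{D}_+ := \{\bm s_1,\ldots,\bm s_t\}$ and a single negative sample $\mathcal{D}_- := \{\bm 0\}$, with threshold $k := m$. Then for any $\bm w$ the summand becomes ${\bm 1}[\bm w^\top \bm s_i > \bm w^\top \bm 0] = {\bm 1}[\langle \bm w, \bm s_i\rangle > 0]$, so the LAO objective equals $|\{\bm s_i \in \mathcal{K} : \langle \bm w, \bm s_i\rangle > 0\}|$ exactly. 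Consequently the constructed instance admits objective value at least $k=m$ iff the OH instance is feasible; both formulations use strict inequalities, so ties ($\langle \bm w, \bm s_i\rangle = 0$) contribute zero in each, and the correspondence is exact. The reduction is clearly polynomial and leaves both $n = t+1$ and $d$ unbounded, so Lemma \ref{lemma:johnson-np-hard} transfers NP-hardness to LAO.

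Finally I would check the hypothesis of the theorem: since the single negative point sits at the origin, $\langle \bm w, \bm 0\rangle = 0$ can never satisfy the strict condition $\langle \bm w, \bm x_j^-\rangle < 0$ demanded for linear separability, so the constructed $\mathcal{D}$ is automatically not linearly separable and lies in the class to which the theorem restricts. Combining membership in NP with this reduction yields NP-completeness. I expect the reduction itself to be the easy part once the differences $\bm x_i^+ - \bm x_j^-$ are identified with the sphere points $\bm s_i$; the main obstacle is the NP-membership step, namely arguing rigorously that an optimal $\bm w$ has polynomially bounded bit-complexity through the arrangement-cell representative.
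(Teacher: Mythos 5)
Your proposal is correct and structurally matches the paper's own proof: the same decision/feasibility formulation, NP-membership via a polynomially encodable certificate $\bm w$, and the identical reduction from OH in which the sphere points become the positive samples and a single negative sample sits at the origin (the paper phrases this via the translation $\bm s_i = (\bm s_i + \bm x_1^-) - \bm x_1^-$, with its illustration taking $\bm x_1^- = \bm 0$, which is exactly your construction). The one place you genuinely diverge is the justification that some optimal $\bm w$ admits a polynomial-size encoding: the paper reduces this to the Spherical Separation problem and cites Dobkin's lemma that SS is in NP, whereas you argue through the central hyperplane arrangement $\{H_{ij}\}$ in weight space together with standard bit-complexity bounds for systems of rational linear inequalities. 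Both are valid; your route is arguably more self-contained, since it needs only LP-type bounds rather than an external geometric-separability result, while the paper's is shorter given the citation. One step you should make explicit: you need the maximum to be attained on a \emph{full-dimensional} cell, not merely ``some cell''; this follows because a sufficiently small perturbation of any $\bm w$ into an adjacent full-dimensional cell preserves every strict inequality already satisfied, so the objective cannot decrease. A final point in your favor: you verify that the constructed instance is not linearly separable (the origin negative can never satisfy $\langle \bm w, \bm x_1^-\rangle < 0$), which the theorem's hypothesis requires and which the paper's write-up leaves implicit.
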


Before proving Thm. \ref{thm:np-complete-lao}, we first reformulate the LAO problem as a feasibility problem and then show this feasibility problem is NP-complete.

\begin{problem}[The feasibility problem of LAO]
\label{problem:flao}
\textit{Given a finite dataset $\mathcal{D}$, a set of linear classifiers $\mathcal{H}$, and positive integers $d, t$ as an input, the feasibility problem of LAO is to ask, does there exist $ \bm w \in \mathcal{H}$ such that}
\begin{equation}
\sum_{i=1}^{n_+} \sum_{j=1}^{n_-} {\bm 1}[\bm w^\top \bm x_{i}^+ > \bm w^\top \bm x_{j}^-] \geq t \ ?
\label{inequ:feasbility-lao}
\end{equation}
\end{problem}

\textit{Proof Sketch.}\footnote{A detailed proof is in the supplementary.} Without loss of generality, let us assume the problem in $\mathbb{Q}^d$. To prove the NP-complete, we only need to show that the feasibility of LAO defined in Problem \ref{problem:flao} is both in NP and is NP-hard by a reduction from OH. First of all, Problem \ref{problem:flao} is in NP. Given any $\bm w \in \mathcal{H}$, one can find a polynomial-time verifier such that it finishes in $\mathcal{O}(n_p n_q d)$ time, and each certificate has a polynomial length for the input.

To show Problem \ref{problem:flao} is NP-hard, given any instance of the OH problem, the goal is to prove that an instance of (\ref{problem:flao}) can solve it. To do this, we construct the training dataset $\mathcal{D}$ so that an instance of Problem \ref{problem:flao} can be defined. Notice that one can rewrite vectors in $\mathcal{K}$ and construct new vectors $\bm x_{i}^+$ and $\bm x_{1}^-$ as the following
\begin{align}
\bm s_1 = \underbrace{(\bm s_1 + \bm x_{1}^-)}_{\bm x_{1}^+} - \bm x_{1}^-, \ldots, \bm s_t = \underbrace{(\bm s_t + \bm x_{1}^-)}_{\bm x_{t}^+} - \bm x_{1}^-. \label{equ:baojian-equ}
\end{align}

\setlength{\columnsep}{8pt}%
\begin{wrapfigure}[13]{L}{2.5cm}
\includegraphics[width=.9\linewidth]{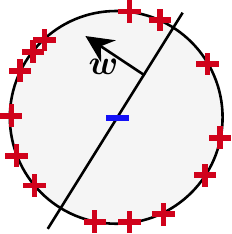}
\includegraphics[width=1.\linewidth]{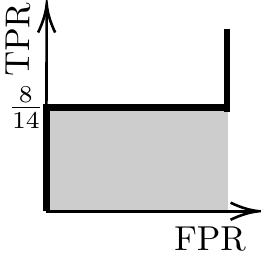}%
\vspace{-15mm}
\label{fig:illustration-lao-oh}
\end{wrapfigure}%

The set of training labels is constructed such that $y_{1}^+, y_{2}^+, \ldots, y_{t}^+$ are all ones and $y_{1}^-=-1$. Combining it with equations in (\ref{equ:baojian-equ}) provides a dataset $\mathcal{D} =\{(\bm x_{1}^+, y_{1}^+),\ldots, (\bm x_{t}^+, y_{t}^+), (\bm x_{1}^-, y_{1}^-)\}$. The left two figures illustrate this reduction where the top figure is a sphere, and each positive sign represents $\bm s_i$ while the negative sign is $\bm x_{1}^- = \bm 0$. The sphere contains 14 points, which correspond to 14 inequalities of the left-hand side of (\ref{inequ:feasbility-lao}). The normal $\bm w$ defines a hyperplane which corresponds to $t=8$ of the right-hand side of (\ref{inequ:feasbility-lao}). The bottom figure shows an AUC curve where TPR and FPR are true positive rate and false positive rate, respectively. It indicates that there exists $\bm w$ such that the number of inequalities that can be satisfied is at least $m$. This transformation and checking procedure can be done in polynomial-time. Therefore, any answer of the instance of LAO is affirmative if and only if the instance of OH is affirmative; hence, the problem is NP-hard. \hfill\qedsymbol

\section{Proposed methods for LAO}
\label{section:algorithms}
In this section, we first present a trivial method in $\mathbb{R}^2$ and then propose AUC-opt inspiring from \textit{topological sweeping}. We then extend AUC-opt to $\mathbb{R}^d$ by projecting high-dimensional problems into low-dimensional ones.

\subsection{A trivial method} 

Notice that every pair of training samples defines a supporting line that separates the rest training samples, and the number of these interesting lines are at most $\mathcal{O}(n^2)$. Given any two sample $\bm x_{i}:=[x_{i1},x_{i2}]^\top$ and $\bm x_{j}:=[x_{j1},x_{j2}]^\top$, the slope of the line defined by $\bm x_i$ and $\bm x_j$ is $m = -(x_{i2} - x_{j2})/(x_{i1} - x_{j1})$. Hence, one can find an algorithm runs in $\mathcal{O}(n^3 \log n)$ and takes $\mathcal{O}(n^2)$ space by using the following two steps: 1) identify all $n(n-1)$ slopes $m$; and 2) for each slope  $m$, let $\bm w := [m, 1]^\top$ and then calculate $\operatorname{AUC}(\bm w)$ by using an $\mathcal{O}(n \log n)$ algorithm \citep{fawcett2006introduction}. $\bm w^\star$ is the one that gives the highest $\operatorname{AUC}(\bm w)$ value.  

\subsection{AUC-opt}

However, the above trivial method can be significantly improved. There are two successive algorithmic improvements needed. The first is that the number of interesting slopes is at most $\mathcal{O}(n_+ n_-)$ by noticing that training pairs with the same labels are not interesting. The second improvement is inspired by the \textit{topological sweep}~\citep{edelsbrunner1989topologically} that we can save $n$ running time by sorting all slopes once; that is, \textit{we only need to calculate the AUC score once for the minimal slope and then sweep over all the rest of the slopes in ascending order.}

\begin{algorithm}[H]
\small
\caption{$[\operatorname{AUC}_{\operatorname{opt}},\bm w]=$AUC-opt($\mathcal{D}$)}
\begin{algorithmic}[1]
\STATE $S =\{\}$  \hfill $\triangleright$ Initialize a slope set
\FOR{$(i, j) \in \{1,\ldots,n_+\} \times \{1,\ldots,n_-\}$} 
\STATE $m = -(x_{i2}^+ - x_{j2}^-)/(x_{i1}^+ - x_{j1}^-)$ \hfill $\triangleright$ Calculate a slope
\STATE $S = S\cup \{(m, i, j)\}$
\ENDFOR
\STATE Let $\epsilon < \min_{(i,j) \ne (i^\prime,j^\prime)} \left| \frac{ x_{i2} -  x_{j2}}{ x_{i1} - x_{j1}} -\frac{ x_{i^\prime2} - x_{j^\prime2}}{ x_{i^\prime1} - x_{j^\prime1}} \right|$ be a positive constant smaller than the minimal difference between any two slopes.
\STATE Let $\min(S) = \min \{m | (m,i,j) \in S\}$
\STATE $\bm w = [\min(S)-1,1]^\top$
\STATE $\operatorname{AUC}_{\operatorname{cur}} = \operatorname{AUC}(\bm w)$ \hfill $\triangleright$ Algo. 1 in \citet{fawcett2006introduction}
\STATE $\operatorname{AUC}_{\operatorname{opt}} = \operatorname{AUC}_{\operatorname{cur}}$
\FOR{$(m,i,j) \in \text{sort}(S)$} 
\STATE \quad $c_1 = 0, c_2 = 0, \bm u = [m - \epsilon,1]^\top, \bm v = [m + \epsilon,1]^\top$
\STATE \quad \textbf{if} $\bm u ^\top\bm x_{i}^+  > \bm u^\top \bm x_{j}^-$ \textbf{then} \ $c_1 = 1$
\STATE \quad \textbf{if} $\bm v^\top\bm x_{i}^+  > \bm v^\top\bm x_{j}^- $ \textbf{then} \ $c_2 = 1$
\STATE \quad $\operatorname{AUC}_{\operatorname{cur}} = \operatorname{AUC}_{\operatorname{cur}} + (c_2 - c_1) /(n_+ n_-)$
\IF{$\operatorname{AUC}_{\operatorname{opt}} < \operatorname{AUC}_{\operatorname{cur}}$}
\STATE $\operatorname{AUC}_{\operatorname{opt}} = \operatorname{AUC}_{\operatorname{cur}}$,\ $\bm w = [m + \epsilon,1]^\top$
\ENDIF
\IF{$\operatorname{AUC}_{\operatorname{opt}} < 1 - \operatorname{AUC}_{\operatorname{cur}}$}
\STATE $\operatorname{AUC}_{\operatorname{opt}} = 1 - \operatorname{AUC}_{\operatorname{cur}}$,\ $\bm w = [-m - \epsilon,-1]^\top$
\ENDIF
\ENDFOR
\end{algorithmic}
\label{algo:auc-opt}
\end{algorithm}

\begin{wrapfigure}{r}{3cm}
\vspace{-3mm}
\includegraphics[width=.9\linewidth]{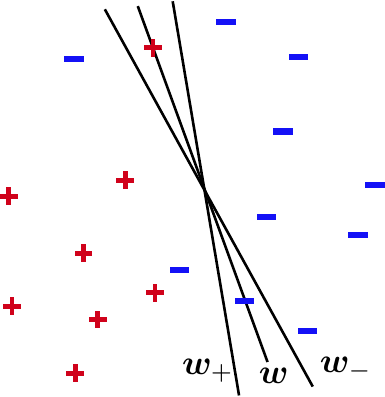}
\vspace{-3mm}
\end{wrapfigure}

The description of AUC-opt is presented in Algo. \ref{algo:auc-opt}. There are three main steps: 1) to obtain all possible slopes and store into $S$ (L1 to L5); 2) to calculate the AUC score of the minimal slope (L7 to L9); and 3) to update $\bm w$ and its AUC score (L10 to L21). The critical part is the update rules of sweeping $\bm w$. More specifically, at each iteration (from L10 to L21), the change of AUC score is from $\{0,\pm 1/(n_+ n_-)\}$ by ``sweeping'' $\bm w$ (L10 - L13). We illustrate this procedure on the upper right figure where red plus signs are positives while blue bar signs are negatives. Let $\bm w = [s,1]^\top, \bm w_- = [s-\epsilon,1]^\top$, and $\bm w_+ = [s+\epsilon,1]^\top$. The increase of slope from $\bm w_-$ to $\bm w_+$ only changes the ordering of two samples which corresponds to a magnitude change of $|1/(n_+ n_-)|$ for $\operatorname{AUC}(\bm w)$. The following theorem shows $\bm w$ returned by Algo. \ref{algo:auc-opt} is optimal.

\begin{theorem}
Given the dataset $\mathcal{D} := \{(\bm x_i, y_i): i \in \left\{1,2,\ldots,n\right\}\}$ where $\bm x_i \in \mathbb{R}^2$ and $y_i \in \{\pm 1\}$ and a collection of linear separators $\mathcal{H}:=\{\bm w: \bm w\in\mathbb{R}^2\}$. The proposed $\operatorname{AUC}$-$\operatorname{opt}$, solves the LAO problem~(\ref{equ:linear-auc-posi}) in $\mathcal{O}(n_+ n_- \log (n_+ n_-))$. This time complexity is tight under the algebraic computation tree model. \label{thm:theorem-run-time}
\end{theorem}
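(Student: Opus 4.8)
The plan is to prove three claims in turn: that $\operatorname{AUC}$-$\operatorname{opt}$ returns a global maximizer of (\ref{equ:linear-auc-posi}), that it runs within the stated bound, and that no algorithm in the algebraic-computation-tree model can do asymptotically better. First I would reduce the planar search to a one-parameter sweep. Since the objective depends on $\bm w$ only through the signs of the $n_+ n_-$ quantities $\bm w^\top(\bm x_i^+-\bm x_j^-)$, it is invariant under positive scaling, so it suffices to range $\bm w$ over the unit circle; writing $\bm w=[s,1]^\top$ covers every direction with positive second coordinate, while the complementary half-circle is reached through the identity $\operatorname{AUC}(-\bm w)=1-\operatorname{AUC}(\bm w)$, which holds exactly whenever no pair ties, i.e. whenever $s$ avoids the critical slopes. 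For a fixed pair the indicator $\bm 1[s(x_{i1}^+-x_{j1}^-)+(x_{i2}^+-x_{j2}^-)>0]$ is a single step in $s$ whose unique breakpoint is the slope $m_{ij}$ of line 3; hence $s\mapsto\operatorname{AUC}([s,1]^\top)$ is piecewise constant with breakpoints contained in the $\mathcal{O}(n_+ n_-)$ critical slopes, so the maximum over a half-circle is attained on one of these $\mathcal{O}(n_+ n_-)$ intervals and evaluating one representative per interval loses nothing. The degenerate directions $[\pm 1,0]^\top$ arise only as the limits $s\to\pm\infty$ and are dominated by the extreme intervals, since resolving a first-coordinate tie can only turn a $0$ indicator into a $1$, so they need not be treated separately.

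Next I would verify that the sweep reconstructs these interval values correctly. Choosing $\epsilon$ strictly below the smallest gap between distinct slopes guarantees that $m_{ij}-\epsilon$ and $m_{ij}+\epsilon$ fall in the two intervals flanking $m_{ij}$ and contain no other breakpoint, so $c_2-c_1$ on lines 13--15 is exactly the jump that pair $(i,j)$ contributes to $\operatorname{AUC}$ as $s$ crosses $m_{ij}$; pairs sharing a common slope are summed correctly because they reuse the same flanking points. Starting from the value at the leftmost interval (computed once by Fawcett's $\mathcal{O}(n\log n)$ routine \citep{fawcett2006introduction}) and accumulating the $\pm 1/(n_+ n_-)$ jumps in sorted-slope order therefore yields the exact $\operatorname{AUC}$ on every interval. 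Comparing, on every interval including the initial one, both $\operatorname{AUC}_{\operatorname{cur}}$ and its complement $1-\operatorname{AUC}_{\operatorname{cur}}$ against the running optimum then returns the global maximizer over the full circle.

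For the running time, generating the slopes is $\mathcal{O}(n_+ n_-)$, the single initial evaluation is $\mathcal{O}(n\log n)$ (dominated, since $n_+ n_-\ge n-1$), the gap $\epsilon$ is read off the sorted order in $\mathcal{O}(n_+ n_-)$, and each of the $\mathcal{O}(n_+ n_-)$ sweep steps is $\mathcal{O}(1)$; the sort of the $\mathcal{O}(n_+ n_-)$ slopes dominates, giving $\mathcal{O}(n_+ n_-\log(n_+ n_-))$. Finally, for tightness I would exhibit an input family forcing any algebraic-computation-tree algorithm to spend $\Omega(n_+ n_-\log(n_+ n_-))$ steps, by reduction from a canonical $\Omega(N\log N)$-hard problem such as element distinctness or \textsc{Max-Gap} \citep{ben1983lower}. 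Taking $n_-=1$, the $N=n_+$ slopes $m_{i1}$ can be made an arbitrary rational set by placing the positive points suitably, so the breakpoints of $\operatorname{AUC}([s,1]^\top)$ encode the input multiset and recovering the optimum forces the algorithm to resolve their order, yielding the bound with $n_+ n_-=N$. I expect the main obstacle to be precisely this last step: making the reduction robust when both $n_+$ and $n_-$ are large, because then the $n_+ n_-$ slopes are the structured pairwise combinations of only $n_+ + n_-$ free coordinates rather than independent values, so the clean distinctness encoding breaks down and one must instead argue worst-case tightness through the $n_-=\mathcal{O}(1)$ family, where the upper and lower bounds coincide at $\Theta(n\log n)$.
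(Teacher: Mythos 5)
Your treatment of correctness and running time follows essentially the same route as the paper: enumerate the $\mathcal{O}(n_+ n_-)$ critical slopes, show $s \mapsto \operatorname{AUC}([s,1]^\top)$ is piecewise constant between consecutive slopes, evaluate one representative per cell by an $\mathcal{O}(1)$ sweep update, and charge everything to the sort. Your mechanism for piecewise constancy is in fact more direct than the paper's: you observe that each pair's indicator is a single step function of $s$ with its unique breakpoint at $m_{ij}$, whereas the paper argues by contradiction through a convex-combination function $h(\lambda)$ that cannot vanish between consecutive slopes. You also nail down details the paper leaves implicit: the horizontal directions $[\pm 1,0]^\top$ (dominated by the extreme intervals), the identity $\operatorname{AUC}(-\bm w)=1-\operatorname{AUC}(\bm w)$ that justifies the complement test, the requirement that $\epsilon$ sit below the smallest gap between \emph{distinct} slopes, and the correct accumulation of pairs sharing a slope. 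One caution: you assert the complement is compared ``on every interval including the initial one''; Algorithm 1 as written never compares $1-\operatorname{AUC}_{\operatorname{cur}}$ for the initial interval, which is harmless only because, absent vertical-slope pairs, that value coincides with the final interval's value, which is compared.

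The genuine gap is the tightness claim. The paper disposes of it by citing Lemma 3.6.16 of \citet{lee1984computational} with no reduction, so you were right to flag this as the hard step, but your sketch does not yet work. With $n_-=1$, realizing the slopes $m_{i1}$ as an arbitrary set leaves the placement of each positive point along its line free; if they all fall in a common open half-plane through the negative sample, the optimum is trivially $1$ and reveals nothing about the multiset, so the construction must force the points to surround the negative sample (a densest-semicircle instance). More fundamentally, ``recovering the optimum forces the algorithm to resolve their order'' is not a valid lower-bound argument in the algebraic computation tree model: maximizing a piecewise constant function does not obviously require sorting its breakpoints, and element-distinctness hardness concerns detecting coincidences, not maximization. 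What is needed is a Ben-Or-style component-counting bound \citep{ben1983lower} for a concrete decision version of the $n_-=1$ problem. Your closing observation, that the bound cannot be uniformly tight when both $n_+$ and $n_-$ grow because the $n_+ n_-$ slopes carry only $n$ degrees of freedom, is a fair criticism, but it applies to the theorem statement and the paper's citation as much as to your own plan.
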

\begin{proof}
We first show that $\bm w$ returned by AUC-opt is optimal. The key of our proof is to show that Algo. \ref{algo:auc-opt} iterates all possible AUC scores given by noticing that all slopes of lines between two consecutive slopes give the same AUC score.

Let $\bm w = [w_1, w_2]^\top$ be any line in $\mathbb{R}^2$. We assume that $\bm w$ is a normal vector of two training samples $\bm x_i^+, \bm x_j^-$, that is, $\bm w$ is given by $\langle \bm w, \bm x_{i}^+ - \bm x_{j}^- \rangle = 0$. The slopes of these normal vectors on $\mathbb{R}^2$ can be calculated. Let the collection of all such slopes be $\mathcal{S} := \left\{-(x_{i2}^+ - x_{j2}^-)/(x_{i1}^+ - x_{j1}^-) : \bm x_{i}^+, \bm x_{j}^- \in \mathcal{D}  \right\}$. Sort the collection of slopes $\mathcal{S}$ and $\mathbb{R}^2$ has been partitioned into $n_+ n_- + 1$ parts. 

We consider two consecutive slopes in the sorted $\mathcal{S}$ more carefully. Let us denote two consecutive sorted slopes as $s_1$ and $s_2$ which associated with two pairs $(\bm x_{i}^+,\bm x_{j}^-)$ and $(\bm x_{{i^\prime}}^+,\bm x_{{j^\prime}}^-)$, respectively. We only need to show that $\forall s \in (s_1,s_2), [s,1]^\top$ has identical AUC score. To do so, all we need to do is to show that given any $\bm x_i^+, \bm x_j^-$, $\bm w$ scores $\bm x_i^+, \bm x_j^-$ as a same ordering. In other words, given any $\bm w: = \{[s,1]^\top: s \in (s_1,s_2)\}$, the quantity $\bm x_i^\top \bm w - \bm x_j^\top \bm w$ always has same sign. In the rest proof, we show this by carefully constructing a quantity function $s(\lambda)$ as the following
\begin{equation}
s(\lambda) := -\frac{x_{i2}^+ - x_{j2}^-}{x_{i1}^+ - x_{j1}^-} \lambda - \frac{x_{{i2}^\prime}^+ - x_{{j2}^\prime}^-}{x_{{i1}^\prime}^+ - x_{{j1}^\prime}^-} (1 - \lambda), \nonumber
\end{equation}
where $\lambda \in (0,1)$. 

We need to study the monotonicity of $s(\lambda)$, we define another function  $h(\lambda) := s(\lambda)\left(x_{i1} - x_{j1}\right) + x_{i2} - x_{j2}$.
Clearly, $h(\lambda)$ is non-decreasing function by noticing that $h^\prime(\lambda) = s_2 - s_1 \geq 0$. We just need to show $h(\lambda )$ never vanishes at $\lambda \in (0,1)$. Assume that we have $h(\lambda) = 0$, then we have $s(\lambda) = - (x_{i2} - x_{j2})/(x_{i1} - x_{j1})$. It makes a contradiction since there is no existing slope between $s_1$ and $s_2$. Similarly, one can show that for any two consecutive slopes $s_1,s_2$, $\forall s \in (s_1,s_2), \bm w:=[-s,-1]^\top$, also defines the same AUC score. Since Algo. \ref{algo:auc-opt} iterates all such lines, the best AUC score of $\bm w$ returned by AUC-opt is indeed optimal.

AUC-opt finishes in $\mathcal{O}(n_+ n_-\log (n_+ n_-))$ since the time complexity is dominated by sorting all $n_+ n_-$ slopes (L10). The tightness of time complexity follows by Lemma 3.6.16 of \citet{lee1984computational}.
\end{proof}

\subsection{Generalization to $\mathbb{R}^d$} 

When problem dimension $d \geq 3$, inspired from \citet{johnson1978densest}, the general idea of solving high-dimensional LAO problem is that one can decompose the original $d$ dimensional problem into several $d-1$ subproblems. Notice that each hyperplane $H(\bm u)$ uniquely defines an interesting subspace, and there are at most $n_+ n_-$ such subspaces. We project points onto $H(\bm u)$ and then solve problem in $d-1$ dimensional subspace (changing the number of coordinates from $d$ to $d-1$), recursively. Specifically, let the projection be defined as $\bm P(\bm x) := \bm x - (\bm x^\top \cdot \bm u / \| \bm u\|^2) \cdot \bm u$ where $\bm u$ is the normal vector of $H(\bm u)$. The critical property of $\bm P$ is that for any $\bm x \in H(\bm u)$, $\bm x^\top \bm P(\bm x_i^+ - \bm x_j^-) = \bm x^\top (\bm x_i^+ - \bm x_j^-)$.  Therefore, the inner products of training samples with $\bm w$ are the same as those of projected training samples. 

Due to this inevitable recursion, the number of interesting hyperplanes exponentially depends upon $d$; hence time complexity of AUC-opt in $\mathbb{R}^d$ is $\mathcal{O}\left((n_p n_q)^{d-1}\log (n_p n_q)\right)$. We summarize this recursive procedure in Algo. \ref{algo:auc-opt-3d}. Due to the space limitation, detailed algorithm description is in the supplementary.

\begin{algorithm}[H]
\small
\caption{$[\operatorname{AUC}_{\operatorname{opt}},\bm w]=$AUC-opt($\mathcal{D},d$)}
\begin{algorithmic}[1]
\STATE $\mathcal{K} = \{(\bm x_i^+ - \bm x_j^-): i \in \{1,\ldots,n_+\},j \in \{1,\ldots,n_-\}\}$
\IF {d=2}
\STATE  \textbf{return} $\operatorname{AUC}_{cur}, \bm w^\prime = \operatorname{AUC-opt}(\mathcal{D})$ \hfill $\triangleright$ call Algo. \ref{algo:auc-opt}
\ENDIF
\FOR{$\bm u \in \mathcal{K}$}
\STATE $\mathcal{P} = \left\{\left(\bm x - \frac{\bm x^\top \cdot \bm u}{\| \bm u\|^2} \cdot \bm u, y\right): (\bm x, y) \in \mathcal{D}\right\}$ \hfill $\triangleright$ project points of $\mathcal{D}$ onto the hyperplane defined by $\bm u$.
\STATE $\mathcal{P}^\prime = \operatorname{change\_coordinates}(\mathcal{P})$ \hfill $\triangleright$ change coordinates so that points are presented in $d-1$ coordinates.
\STATE $\operatorname{AUC}_{cur}, \bm w^\prime  =\operatorname{AUC-opt}(\mathcal{P}^\prime, d-1)$
\STATE $\bm w = \operatorname{change\_coordinates}(\bm w^\prime)$ \hfill $\triangleright$ change $d-1$ coordinates of $\bm w^\prime$ back to $d$ coordinates.
\IF {$\operatorname{AUC}_{\operatorname{opt}} < \operatorname{AUC}_{\operatorname{cur}}$}
\STATE $\bm w^\star = \bm w$, $\operatorname{AUC}_{\operatorname{opt}} = \operatorname{AUC}_{\operatorname{cur}}$
\ENDIF
\ENDFOR
\end{algorithmic}
\label{algo:auc-opt-3d}
\end{algorithm}

\begin{theorem}
\label{thm:generalization}
Given the dataset $\mathcal{D} := \{(\bm x_i, y_i): i \in \left\{1,2,\ldots,n\right\}\}$ where $\bm x_i \in \mathbb{R}^d$ and $y_i \in \{\pm 1\}$ and a collection of linear separators $\mathcal{H}:=\{\bm w: \bm w\in\mathbb{R}^d\}$. There exists an algorithm solves the LAO problem (\ref{equ:linear-auc-posi}) exactly in $\mathcal{O}\left((n_+ n_-)^{d-1}\log (n_+ n_-)\right)$. 
\end{theorem}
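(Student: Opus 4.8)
The plan is to argue by induction on the dimension $d$, taking Theorem~\ref{thm:theorem-run-time} as the base case $d=2$ and showing that Algo.~\ref{algo:auc-opt-3d} correctly reduces a $d$-dimensional LAO instance to $n_+ n_-$ instances in dimension $d-1$. Write $N := n_+ n_-$ and work in the parameter space $\mathbb{R}^d$ of the weight vector $\bm w$. Each ordered pair $(\bm x_i^+,\bm x_j^-)$ contributes a difference vector $\bm u = \bm x_i^+-\bm x_j^-$ and a central hyperplane $H(\bm u)=\{\bm w:\langle\bm w,\bm u\rangle=0\}$; the indicator $\bm 1[\langle\bm w,\bm u\rangle>0]$ is constant on each open cell of the arrangement of these $N$ hyperplanes, hence $\operatorname{AUC}(\bm w)$ is constant on every cell. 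Maximizing AUC is therefore equivalent to selecting the cell of largest objective value, and the whole task is to enumerate one representative per relevant cell without paying the $\mathcal{O}(N^{d})$ cost of listing all of them.

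The engine of the reduction is the projection identity stated before the algorithm: for $\bm w\in H(\bm u)$ one has $\langle\bm w,\bm P(\bm x_i^+-\bm x_j^-)\rangle=\langle\bm w,\bm x_i^+-\bm x_j^-\rangle$, so replacing every difference vector by its projection and re-expressing $H(\bm u)\cong\mathbb{R}^{d-1}$ turns ``LAO restricted to $\bm w\perp\bm u$'' into a genuine $(d-1)$-dimensional LAO instance (with $\bm P(\bm u)=\bm 0$, so pair $\bm u$ is permanently tied). By the inductive hypothesis the recursive call returns $\max_{\bm w\in H(\bm u)}\operatorname{AUC}(\bm w)$ exactly. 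I would then prove a reduction lemma reading the global optimum off these hyperplane-restricted optima: an optimal cell $C^\star$ is a full-dimensional cone that, since $\mathcal{D}$ is not separable and hence $\operatorname{AUC}<1$, has at least one bounding facet lying in some $H(\bm u_0)$; the relative interior of that facet is a $(d-1)$-cell found by the recursive call on $\bm u_0$, and perturbing the returned $\bm w$ to $\bm w\pm\epsilon\bm u_0$ re-enters $C^\star$ and restores the single pair that $H(\bm u_0)$ forces into a tie. Ranging over all $N$ choices of $\bm u$ guarantees the facet of $C^\star$ is visited, so the best recovered value equals the true optimum.

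For the complexity, let $T(d)$ denote the running time on an instance with $N=n_+n_-$ active pairs. Each of the $N$ iterations performs one projection and one change of coordinates, each costing only $\mathrm{poly}(n,d)$ (a lower-order term), plus one recursive call, giving $T(d)\le N\cdot\bigl(T(d-1)+\mathrm{poly}(n,d)\bigr)$ with $T(2)=\mathcal{O}(N\log N)$ from Theorem~\ref{thm:theorem-run-time}. Unrolling the recurrence, the $\log$ factor is generated once at the bottom level and multiplied by the $N^{d-2}$ iterations sitting above it, yielding $T(d)=\mathcal{O}(N^{d-1}\log N)=\mathcal{O}\bigl((n_+n_-)^{d-1}\log(n_+n_-)\bigr)$, the claimed bound.

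The routine part here is the recurrence; the delicate part is the reduction lemma, i.e.\ proving that sweeping only the union $\bigcup_{\bm u}H(\bm u)$ (together with the off-hyperplane perturbation) loses nothing. The crux is the tie convention: because the LAO objective uses a strict inequality, a pair lying on $H(\bm u)$ contributes $0$, so a hyperplane-restricted optimum can fall short of the adjacent full-dimensional cell by exactly $1/(n_+n_-)$; the argument must show this deficit is always the single pair $\bm u_0$ and is recovered by the $\pm\epsilon\bm u_0$ perturbation. I also expect to spend care on degeneracies---coincident or repeated hyperplanes, vanishing difference vectors, and several pairs becoming tied simultaneously on a lower-dimensional face---where I would invoke a general-position or symbolic-perturbation argument to ensure every optimal cell's facet is reached in the interior of some $(d-1)$-cell and that the choice of $\epsilon$ stays consistent across the levels of the recursion.
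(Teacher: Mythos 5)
Your proposal is correct and follows the same decomposition as the paper's own proof: iterate over the $N=n_+n_-$ central hyperplanes $H(\bm u)$ determined by the difference vectors, use the projection identity to turn the restriction to $H(\bm u)$ into a genuine $(d-1)$-dimensional LAO instance, recurse with Theorem~\ref{thm:theorem-run-time} as the $d=2$ base case, and unroll $T(d)\le N\bigl(T(d-1)+\mathrm{poly}(n,d)\bigr)$ to get $\mathcal{O}\bigl(N^{d-1}\log N\bigr)$; the complexity bookkeeping is identical. Where you genuinely depart from the paper is your ``reduction lemma'' handling the tie convention, and this is not a cosmetic addition: the paper's supplementary proof stops at the assertion that it suffices to search $\bm w\in\bigcup_t H_t$ because the optimal cone has a face on some hyperplane, which is false as stated for the strict-inequality objective. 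Any pair parallel to $\bm u_0$ contributes $0$ everywhere on $H(\bm u_0)$, so the hyperplane-restricted maximum can fall short of the adjacent full-dimensional cell, and Algorithm~\ref{algo:auc-opt-3d} as written propagates the recursive call's value with no off-hyperplane correction. Concretely, take one negative sample at the origin and four positive samples $\bm e_1,\bm e_2,\bm e_3,-(\bm e_1+\bm e_2+\bm e_3)/\sqrt{3}$ in $\mathbb{R}^3$ (not linearly separable in the paper's sense): the true optimum satisfies three of the four pair inequalities, but every hyperplane-restricted maximum satisfies only two, so the unpatched recursion reports $1/2$ instead of $3/4$. Your $\bm w\pm\epsilon\bm u_0$ perturbation after each recursive call---together with the two-sided argument that every value your procedure reports is attained by an actual point, and that the facet of the optimal cell is visited for some $\bm u_0$---is exactly what closes this gap, and the degeneracies you flag (several pairs parallel to $\bm u_0$, consistency of $\epsilon$ across levels) are the right remaining details: the credited bonus should be the number of parallel pairs on the chosen side, not always a single pair. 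One small quibble: the existence of a bounding facet of $C^\star$ needs only that some difference vector is nonzero, not non-separability. In short, you take the paper's route but your key lemma repairs a real defect in the paper's correctness argument rather than merely re-deriving it.
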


\begin{proof}
The proof is in the supplementary.
\end{proof}

\newcommand*\rot{\rotatebox{60}}
\definecolor{Gray}{gray}{0.9}
\makeatletter
\newcommand{\thickhline}{%
    \noalign {\ifnum 0=`}\fi \hrule height 1pt
    \futurelet \reserved@a \@xhline
}
\newcolumntype{"}{@{\hskip\tabcolsep\vrule width .5pt\hskip\tabcolsep}}
\makeatother
\setlength\tabcolsep{.02pt}
\begin{table*}[h]
\small
\caption{The comparison of AUC scores over 200 trials of 50 t-SNE datasets on both $\mathbb{R}^2$ and $\mathbb{R}^3$. Each cell (I,J) means the number of datasets where method I is significantly better than method J by using t-test with a significance level of 5\%. Numbers in the red region are results of AUC optimizers better than standard classifiers, while numbers in the blue region are the reverse.\vspace{-2mm}}
\centering
\begin{tabular}{P{0.05\textwidth}"p{0.1\textwidth}P{0.052\textwidth}P{0.052\textwidth}P{0.052\textwidth}P{0.052\textwidth}P{0.052\textwidth}P{0.052\textwidth}P{0.052\textwidth}P{0.052\textwidth}"P{0.052\textwidth}P{0.052\textwidth}P{0.052\textwidth}P{0.052\textwidth}P{0.052\textwidth}P{0.052\textwidth}P{0.052\textwidth}P{0.052\textwidth}}\toprule 
&  & \rot{SVM} & \rot{B-SVM} & \rot{LR} & \rot{B-LR} & \rot{\small{SVM-Perf}} & \rot{SPAUC} & \rot{SPAM} & \rot{AUC-opt} & \rot{SVM} & \rot{B-SVM} & \rot{LR} & \rot{B-LR} & \rot{\small{SVM-Perf}} & \rot{SPAUC} & \rot{SPAM} & \rot{AUC-opt} \\\hline 
& &\multicolumn{8}{c}{Significance t-test ($\alpha=0.05$) on \textit{training}} & \multicolumn{8}{c}{Significance t-test ($\alpha=0.05$) on \textit{testing}} \\
& SVM & - & 0 & 1 & 1 & \cellcolor{blue!30!white}3 & \cellcolor{blue!30!white}0 & \cellcolor{blue!30!white}1 & \cellcolor{blue!30!white}0 & - & 0 & 0 & 0 & \cellcolor{blue!30!white}3 & \cellcolor{blue!30!white}0 & \cellcolor{blue!30!white}1 & \cellcolor{blue!30!white}1 \\
& B-SVM & 30 & - & 7 & 4 & \cellcolor{blue!30!white}27 & \cellcolor{blue!30!white}3 & \cellcolor{blue!30!white}3 & \cellcolor{blue!30!white}0 & 31 & - & 7 & 4 & \cellcolor{blue!30!white}23 & \cellcolor{blue!30!white}3 & \cellcolor{blue!30!white}3 & \cellcolor{blue!30!white}1 \\ 
& LR & 31 & 11 & - & 2 & \cellcolor{blue!30!white}27 & \cellcolor{blue!30!white}2 & \cellcolor{blue!30!white}5 & \cellcolor{blue!30!white}0 & 32 & 14 & - & 2 & \cellcolor{blue!30!white}26 & \cellcolor{blue!30!white}2 & \cellcolor{blue!30!white}5 & \cellcolor{blue!30!white}2 \\
& B-LR & 32 & 15 & 6 & - & \cellcolor{blue!30!white}30 & \cellcolor{blue!30!white}2 & \cellcolor{blue!30!white}4 & \cellcolor{blue!30!white}0 & 33 & 16 & 6 & - & \cellcolor{blue!30!white}29 & \cellcolor{blue!30!white}3 & \cellcolor{blue!30!white}4 & \cellcolor{blue!30!white}2 \\
$d=2$ & SVM-Perf & \cellcolor{red!30!white}16 & \cellcolor{red!30!white}2 & \cellcolor{red!30!white}0 & \cellcolor{red!30!white}0 & - & 0 & 1 & 0 & \cellcolor{red!30!white}16 & \cellcolor{red!30!white}2 & \cellcolor{red!30!white}0 & \cellcolor{red!30!white}1 & - & 1 & 2 & 2 \\
& SPAUC & \cellcolor{red!30!white}31 & \cellcolor{red!30!white}14 & \cellcolor{red!30!white}8 & \cellcolor{red!30!white}2 & 28 & - & 2 & 0 & \cellcolor{red!30!white}29 & \cellcolor{red!30!white}12 & \cellcolor{red!30!white}7 & \cellcolor{red!30!white}3 & 26 & - & 4 & 2 \\
& SPAM & \cellcolor{red!30!white}33 & \cellcolor{red!30!white}10 & \cellcolor{red!30!white}9 & \cellcolor{red!30!white}3 & 29 & 1 & - & 0 & \cellcolor{red!30!white}32 & \cellcolor{red!30!white}11 & \cellcolor{red!30!white}6 & \cellcolor{red!30!white}3 & 26 & 1 & - & 2 \\
& AUC-opt & \cellcolor{red!30!white}\textbf{40} & \cellcolor{red!30!white}\textbf{29} & \cellcolor{red!30!white}\textbf{20} & \cellcolor{red!30!white}\textbf{17} & \textbf{40} & \textbf{21} & \textbf{26} & - & \cellcolor{red!30!white}\textbf{34} & \cellcolor{red!30!white}\textbf{18} & \cellcolor{red!30!white}\textbf{15} & \cellcolor{red!30!white}\textbf{11} & \textbf{31} & \textbf{13} & \textbf{17} & - \\\bottomrule
&  SVM &  - & 0 & 0 & 0 & \cellcolor{blue!30!white}0 & \cellcolor{blue!30!white}0 & \cellcolor{blue!30!white}0 & \cellcolor{blue!30!white}0 & - & 0 & 0 & 0 & \cellcolor{blue!30!white}1 & \cellcolor{blue!30!white}1 & \cellcolor{blue!30!white}0 & \cellcolor{blue!30!white}1 \\
 &  B-SVM &  25 & - & 0 & 0 & \cellcolor{blue!30!white}24 & \cellcolor{blue!30!white}0 & \cellcolor{blue!30!white}0 & \cellcolor{blue!30!white}0 & 24 & - & 1 & 1 & \cellcolor{blue!30!white}20 & \cellcolor{blue!30!white}4 & \cellcolor{blue!30!white}3 & \cellcolor{blue!30!white}3 \\
 &  LR &  34 & 6 & - & 0 & \cellcolor{blue!30!white}32 & \cellcolor{blue!30!white}5 & \cellcolor{blue!30!white}7 & \cellcolor{blue!30!white}0 & 38 & 14 & - & 1 & \cellcolor{blue!30!white}34 & \cellcolor{blue!30!white}10 & \cellcolor{blue!30!white}12 & \cellcolor{blue!30!white}\textbf{8} \\
 &  B-LR &  34 & 6 & 0 & - & \cellcolor{blue!30!white}32 & \cellcolor{blue!30!white}7 & \cellcolor{blue!30!white}9 & \cellcolor{blue!30!white}0 & 36 & 12 & 0 & - & \cellcolor{blue!30!white}33 & \cellcolor{blue!30!white}9 & \cellcolor{blue!30!white}11 & \cellcolor{blue!30!white}\textbf{6} \\
 $d=3$ &  SVM-Perf &  \cellcolor{red!30!white}1 & \cellcolor{red!30!white}0 & \cellcolor{red!30!white}0 & \cellcolor{red!30!white}0 & - & 0 & 0 & 0 & \cellcolor{red!30!white}1 & \cellcolor{red!30!white}0 & \cellcolor{red!30!white}0 & \cellcolor{red!30!white}0 & - & 1 & 2 & 3 \\
 &  SPAUC &  \cellcolor{red!30!white}24 & \cellcolor{red!30!white}0 & \cellcolor{red!30!white}0 & \cellcolor{red!30!white}0 & 27 & - & 0 & 0 & \cellcolor{red!30!white}24 & \cellcolor{red!30!white}3 & \cellcolor{red!30!white}0 & \cellcolor{red!30!white}1 & 22 & - & 1 & 3 \\
 &  SPAM &  \cellcolor{red!30!white}26 & \cellcolor{red!30!white}0 & \cellcolor{red!30!white}0 & \cellcolor{red!30!white}0 & 23 & 0 & - & 0 & \cellcolor{red!30!white}25 & \cellcolor{red!30!white}2 & \cellcolor{red!30!white}0 & \cellcolor{red!30!white}0 & 21 & 2 & - & 2 \\
 &  AUC-opt &  \cellcolor{red!30!white}\textbf{42} & \cellcolor{red!30!white}\textbf{31} & \cellcolor{red!30!white}\textbf{6} & \cellcolor{red!30!white}\textbf{4} & \textbf{41} & \textbf{27} & \textbf{33} & - & \cellcolor{red!30!white}\textbf{32} & \cellcolor{red!30!white}\textbf{15} & \cellcolor{red!30!white}2 & \cellcolor{red!30!white}2 & \textbf{32} & \textbf{9} & \textbf{10} & - \\\bottomrule
\end{tabular}
\label{tab:auc-scores}
\vspace{-2mm}
\end{table*}

\section{Experiments}
\label{section:experiments}

We evaluate AUC-opt on both $\mathbb{R}^2$ and $\mathbb{R}^3$ by using t-SNE datasets. To confirm that AUC-opt produces the best possible linear AUC classifiers, we compare it with 7 other classifiers on the binary classification task at the training and testing stage. Furthermore, we compare the approximate AUC methods with other standard methods for nonlinear classifiers. \textit{More results and experimental details, including data collection, baseline description, and parameter tuning, are in the supplementary.}\footnote{Our code can be found in \url{https://github.com/baojian/auc-opt}}

\subsection{Datasets and experimental setup}

\paragraph{Datasets.} We collect 50 real-world datasets where the positive ratio ($n_p/n$) of most datasets are $\leq 0.1$. These highly imbalanced datasets make optimizing AUC problem meaningful. To generate 2 and 3 dimensional samples, we project samples of these datasets onto $\mathbb{R}^2$ and $\mathbb{R}^3$ respectively using t-SNE \citep{maaten2008visualizing} so that class patterns are conserved. Models use projected points as training samples while keep labels unchanged.
 
 \noindent\textbf{Experimental setup.} For each dataset, 50\% samples are for training and the rest for testing. All parameters are tuned by 5-fold cross-validation. Each dataset is randomly shuffled 200 times, and the reported results are averaged on 200 trials. All methods have been tested on servers with Intel(R) Xeon(R) CPU (2.30GHz) 64 cores and 187G memory. For all methods that involve randomness, the random state has been fixed for the purpose of reproducibility.

\noindent\textbf{Baselines.} We consider the following baseline classifiers:  1) Logistic Regression (LR); 2) B-LR. Balanced Logistic Regression (B-LR). We adjust weights of samples inversely proportional to class frequencies so that it can have better performance on imbalanced datasets; 3) Support-vector Machine (SVM); 4) B-SVM. The balanced SVM (B-SVM) uses the same strategy as B-LR; 5) SVM-Perf. The SVM-Perf algorithm is proposed in \citet{joachims2005support} where the goal is to minimize the AUC loss by using SVM-based method; 6) SPAUC. The Stochastic Proximal AUC (SPAUC) maximization algorithm is proposed in \citet{lei2019stochastic}; 7) SPAM. The Stochastic Proximal AUC Maximization (SPAM) algorithm is proposed in \citet{natole2018stochastic}.

\subsection{Results on t-SNE datasets}

\begin{table*}[t]
\small
\caption{The comparison of testing AUC scores of 50 real-world datasets. The settings are the same as in Table~\ref{tab:auc-scores}.}
\centering
\begin{tabular}{p{0.14\textwidth}P{0.055\textwidth}P{0.055\textwidth}P{0.055\textwidth}P{0.055\textwidth}P{0.055\textwidth}P{0.055\textwidth}P{0.055\textwidth}P{0.055\textwidth}P{0.055\textwidth}P{0.055\textwidth}P{0.055\textwidth}P{0.055\textwidth}P{0.055\textwidth}P{0.055\textwidth}P{0.08\textwidth}}\toprule 
& \rot{SVM} & \rot{B-SVM} & \rot{LR} & \rot{B-LR} & \rot{RF} & \rot{B-RF} & \rot{GB} & \rot{SVM-RBF} & \rot{\scriptsize{B-SVM-RBF}} & \rot{RankBoost} & \rot{AdaBoost} & \rot{SPAM} & \rot{SPAUC} & \rot{SVM-Perf} & \rot{\textbf{Average}} \\\hline 
SVM   & -  & 4  & 2  & 3  & 10  & 9  & 15  & 19  & 8  & \cellcolor{blue!30!white}15  & \cellcolor{blue!30!white}17  & \cellcolor{blue!30!white}29  & \cellcolor{blue!30!white}30  & \cellcolor{blue!30!white}3 & 12.62 \\
B-SVM   & 22  & -  & 6  & 4  & 15  & 12  & 18  & 21  & 12  & \cellcolor{blue!30!white}15  & \cellcolor{blue!30!white}19  & \cellcolor{blue!30!white}36  & \cellcolor{blue!30!white}38  & \cellcolor{blue!30!white}20 & 18.31 \\
LR   & 31  & 21  & -  & 5  & 17  & 14  & 19  & 23  & 14  & \cellcolor{blue!30!white}18  & \cellcolor{blue!30!white}20  & \cellcolor{blue!30!white}38  & \cellcolor{blue!30!white}38  & \cellcolor{blue!30!white}29 & 22.08 \\
B-LR   & 31  & 23  & 8  & -  & 17  & 14  & 19  & 23  & 14  & \cellcolor{blue!30!white}19  & \cellcolor{blue!30!white}21  & \cellcolor{blue!30!white}37  & \cellcolor{blue!30!white}37  & \cellcolor{blue!30!white}28 & 22.38 \\
RF   & 34  & 30  & 31  & 31  & -  & 6  & 34  & 29  & 19  & \cellcolor{blue!30!white}34  & \cellcolor{blue!30!white}38  & \cellcolor{blue!30!white}39  & \cellcolor{blue!30!white}37  & \cellcolor{blue!30!white}34 & 30.46 \\
B-RF   & 36  & 33  & 31  & 31  & 19  & -  & 33  & 32  & 19  & \cellcolor{blue!30!white}35  & \cellcolor{blue!30!white}37  & \cellcolor{blue!30!white}38  & \cellcolor{blue!30!white}36  & \cellcolor{blue!30!white}35 & 31.92 \\
GB   & 31  & 29  & 29  & 29  & 12  & 13  & -  & 25  & 17  & \cellcolor{blue!30!white}26  & \cellcolor{blue!30!white}30  & \cellcolor{blue!30!white}36  & \cellcolor{blue!30!white}38  & \cellcolor{blue!30!white}32 & 26.69 \\
SVM-RBF   & 27  & 26  & 26  & 27  & 16  & 15  & 21  & -  & 6  & \cellcolor{blue!30!white}23  & \cellcolor{blue!30!white}23  & \cellcolor{blue!30!white}30  & \cellcolor{blue!30!white}31  & \cellcolor{blue!30!white}27 & 22.92 \\
B-SVM-RBF   & 36  & 32  & 32  & 32  & 24  & 22  & 29  & 20  & -  & \cellcolor{blue!30!white}31  & \cellcolor{blue!30!white}33  & \cellcolor{blue!30!white}39  & \cellcolor{blue!30!white}38  & \cellcolor{blue!30!white}37 & 31.15 \\
RankBoost   & \cellcolor{red!30!white}32  & \cellcolor{red!30!white}31  & \cellcolor{red!30!white}29  & \cellcolor{red!30!white}29  & \cellcolor{red!30!white}9  & \cellcolor{red!30!white}8  & \cellcolor{red!30!white}16  & \cellcolor{red!30!white}24  & \cellcolor{red!30!white}13  & -  & 35  & 38  & 38  & 33 & 25.77 \\
AdaBoost   & \cellcolor{red!30!white}31  & \cellcolor{red!30!white}29  & \cellcolor{red!30!white}26  & \cellcolor{red!30!white}27  & \cellcolor{red!30!white}7  & \cellcolor{red!30!white}8  & \cellcolor{red!30!white}9  & \cellcolor{red!30!white}22  & \cellcolor{red!30!white}12  & 3  & -  & 37  & 35  & 31 & 21.31 \\
SPAM   & \cellcolor{red!30!white}13  & \cellcolor{red!30!white}4  & \cellcolor{red!30!white}2  & \cellcolor{red!30!white}2  & \cellcolor{red!30!white}7  & \cellcolor{red!30!white}7  & \cellcolor{red!30!white}12  & \cellcolor{red!30!white}18  & \cellcolor{red!30!white}7  & 8  & 12  & -  & 27  & 15 & 10.31 \\
SPAUC   & \cellcolor{red!30!white}6  & \cellcolor{red!30!white}3  & \cellcolor{red!30!white}2  & \cellcolor{red!30!white}1  & \cellcolor{red!30!white}9  & \cellcolor{red!30!white}7  & \cellcolor{red!30!white}10  & \cellcolor{red!30!white}17  & \cellcolor{red!30!white}7  & 10  & 10  & 10  & -  & 8 & 7.69 \\
SVM-Perf   & \cellcolor{red!30!white}3  & \cellcolor{red!30!white}10  & \cellcolor{red!30!white}4  & \cellcolor{red!30!white}5  & \cellcolor{red!30!white}9  & \cellcolor{red!30!white}8  & \cellcolor{red!30!white}15  & \cellcolor{red!30!white}16  & \cellcolor{red!30!white}7  & 13  & 16  & 25  & 28  & - & 12.23 \\\bottomrule
\end{tabular}
\label{tab:auc-scores-real}
\vspace{-3mm}
\end{table*}

\begin{figure}[h]
\centering
\begin{subfigure}[b]{.25\textwidth}
\includegraphics[height=3.5cm,width=4cm]{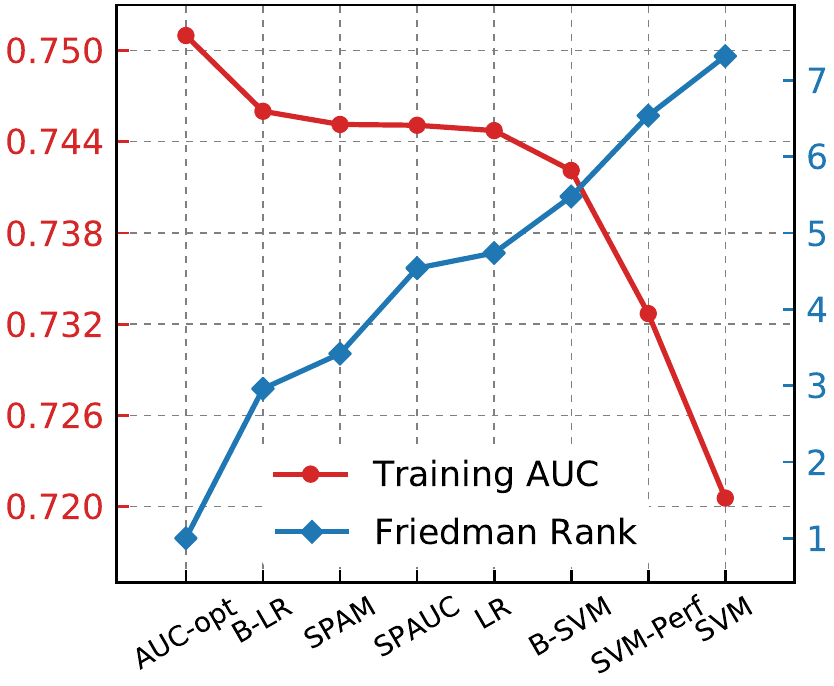}
\label{fig:rank-train-auc}
\caption{Mean of training AUC}
\end{subfigure}%
\begin{subfigure}[b]{.25\textwidth}
\includegraphics[height=3.5cm,width=4cm]{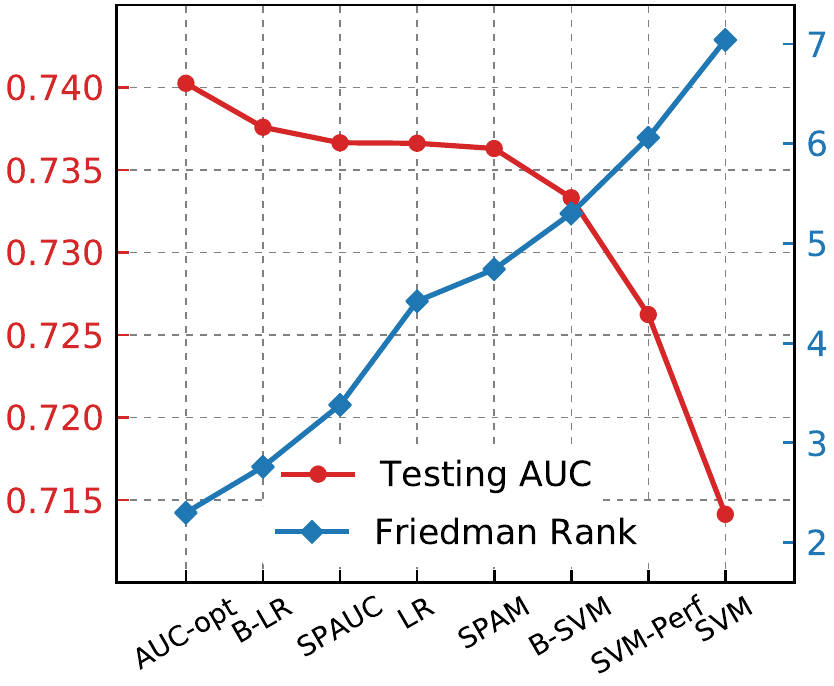}
\label{fig:rank-test-auc}
\caption{Mean of testing AUC}
\end{subfigure}
\begin{subfigure}[b]{.25\textwidth}
\includegraphics[scale=.45]{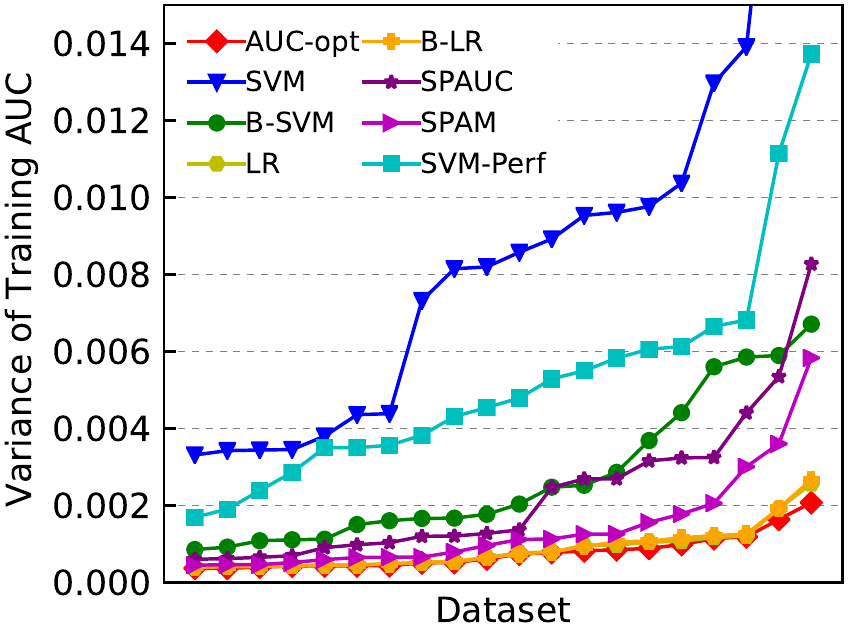}
\label{fig:tr-te-auc-var}
\caption{Variance of training AUC}
\end{subfigure}%
\begin{subfigure}[b]{.25\textwidth}
\includegraphics[scale=.45]{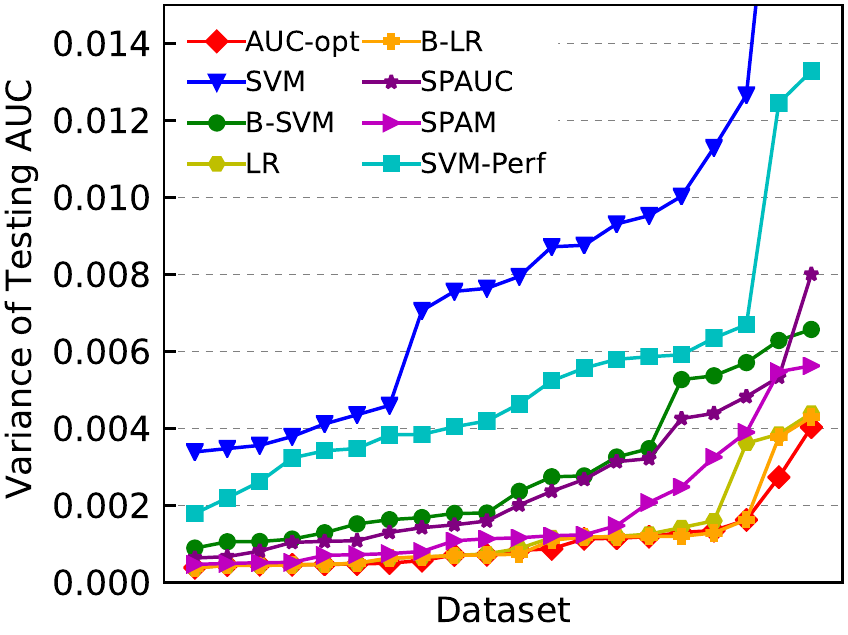}
\caption{Variance of testing AUC}
\end{subfigure}
\caption{Mean and Variance of AUC scores of eight classifiers over 50 t-SNE datasets in $\mathbb{R}^2$.}
\label{fig:rank-test-auc-var}
\vspace{-3mm}
\end{figure}

\noindent\textbf{Comparison of AUC scores.} Table \ref{tab:auc-scores} presents the comparison of AUC scores calculated on both training and testing datasets in $\mathbb{R}^2$ and $\mathbb{R}^3$. To compare AUC scores from different methods, for the method I and J on a specific dataset, we calculate whether I is significantly better than J in a statistical sense. Important observations are: 1) AUC-opt achieves more significant gains over standard classifiers than the approximate AUC optimizers on the training stage. It confirms that gaps between AUC-opt and other approximate AUC optimizers are significant on some datasets; 
2) Compared with SVM and LR, balanced versions of SVM (B-SVM) and LR (B-LR) have better performance. We see that the simple weighting strategy improves the performance by adjusting the weights on training samples;\footnote{The balanced version of a classifier is that a class weight has been added for each sample. Specifically, weight $n/(2*n_p)$ is for each positive sample while weight $n/(2*n_q)$ is for each negative sample.} and 3) Compared with the best standard classifier B-LR, AUC-opt produces significant gains on 17 training datasets while reduced to 11 on testing datasets. It is worse in $\mathbb{R}^3$. This degradation happens mainly because results of B-LR are tuned by adding regularization hence better generalization ability, while AUC-opt does not. With regularization, B-LR has better generalization performance on testing datasets but regularization is not taking into consideration in our method. Under $\mathbb{R}^3$, AUC-opt only beats 4 datasets over B-LR, the best classifier. It means that the gain proves insignificant on most datasets.

We also report the mean with the Friedman Rank and variance of AUC scores in Fig. \ref{fig:rank-test-auc-var}. The Friedman Rank and averaged AUC scores shown in Fig. \ref{fig:rank-test-auc-var} (a) and (b) present that superiority of AUC-opt on optimizing AUC score. Variances of AUC shown in (c) and (d) indicate that 0-1 objective optimization (AUC-opt) is more robust, while SVMs are not. Yet, the performance of SVM-Perf is better than SVM but not B-SVM.

\begin{wraptable}{l}{4.3cm}
\small
\vspace{-5mm}
\begin{tabular}{P{0.1\textwidth}|P{0.13\textwidth}}\toprule
Method & Run Time\\\hline 
SVM & 0.1918  $\displaystyle \pm $ 0.4667\\
B-SVM & 0.2862  $\displaystyle \pm $0.5205 \\ 
LR & 0.0785 $\displaystyle \pm $ 0.0489 \\
B-LR & 0.0781 $\displaystyle \pm $0.0491 \\
SPAM & 3.3675 $\displaystyle \pm $ 6.6170 \\ 
SPAUC & 3.3826 $\displaystyle \pm $ 6.7024 \\ 
SVM-Perf & 8.9185 $\displaystyle \pm $ 14.691 \\ 
AUC-opt & 3.7073 $\displaystyle \pm $ 7.4934 \\\bottomrule
\end{tabular}
\vspace{-5mm}
\label{tab:run-time}
\end{wraptable}

\noindent\textbf{Run time.} The left table shows the comparison of the averaged run time (in seconds) over all datasets. The run time of AUC optimizers is slower than standard classifiers. This is because the objective of AUC optimizers is the sum of the loss of training example pairs, which makes them $\mathcal{O}(n^2)$ per batch while AUC-opt uses less on average compared with SVM-Perf.

\noindent\textbf{Comparison of other metrics.} As presented in our supplementary, an interesting observation is that AUC optimizers obtain better performance on balanced accuracy and F1-score than the more traditional classifiers. Results are consistent with the findings shown in \citet{narasimhan2013relationship} where an AUC optimizer is a good way to construct a good balanced classifier. The performance of LR is good on AUC, but not on balanced accuracy and F1 score.

\subsection{Results on real-world datasets}

We test both the approximate AUC optimizers and the standard classification methods on 50 real-world datasets. The AUC scores on testing are reported in Table \ref{tab:auc-scores-real}. First, Balanced Random Forest (B-RF) and Balanced SVM-RBF (B-SVM-RBF) prove the best overall on testing AUC scores. It wins 31.92 and 31.15 datasets on average respectively. This performance is consistent with findings shown in both \citet{fernandez2014we} and \citet{couronne2018random}. Furthermore, a relationship between RF and LR has been studied in \citet{couronne2018random} where it has been shown that RF can obtain much higher AUC scores compared with LR. Boosting-based methods such as AdaBoost\footnote{We treat AdaBoost as the AUC-based method because theoretical finding indicates that AdaBoost is equivalent to RankBoost.} and Gradient Boost (GB) also work well.

RankBoost is inferior to RF and B-RF, winning on only 9 and 8 datasets, respectively. The performance of AdaBoost and RankBoost prove competitive with each other. This has been theoretically justified in \citet{rudin2009margin}. However, interestingly, RankBoost still outperforms AdaBoost over 35 datasets in the testing stage. Gradient Boost (GB) wins more datasets than RankBoost. Generally speaking, two nonlinear AUC optimizers are inferior to these popular nonlinear standard classifiers. This clearly suggests that approximate AUC optimizers may not lead to the best AUC performance, hence having space to improve.  All linear methods lose on average to non-linear ones. 

\section{Conclusion and future work}
\label{section:conclusion}

Our complexity results show linear AUC optimization is NP-complete via reduction to the open hemisphere problem. It remains interesting to prove the hardness results for other hypothesis classes mentioned in \citet{ben2003difficulty}. We then present an optimal method AUC-opt that is both time and space-efficient for optimizing AUC in $\mathbb{R}^2$. We demonstrate that it can be naturally extended to $\mathbb{R}^d$ with a total cost $\mathbb{O}((n_+n_-)^{d-1}\log(n_+ n_-))$. Our empirical results suggest that to justify the objective to optimize AUC, more effort may be needed to improve the optimization quality of AUC optimizers. 

AUC-opt is impractical for real-world datasets since the time complexity is exponentially getting worse 
with the dimension $d$. However, it remains interesting to see whether more efficient algorithms exist for higher but still moderate dimensionality. One potential direction is to use branch and bound as explored in \citet{nguyen2013algorithms}. It is also interesting to compare our method with \citet{rudin2018direct}, a recently proposed method that directly optimizes a rerank statistic. 

\clearpage

\section{Acknowledgments}
The authors would like to thank anonymous reviewers for their valuable comments. The work of Baojian Zhou is partially supported by startup funding from Fudan University. Steven Skiena was partially supported by NSF grants IIS-1926781, IIS-1927227, IIS-1546113, OAC-191952, and a New York State Empire Innovation grant.
\bibliography{aaai23}

\begin{thebibliography}{61}
\providecommand{\natexlab}[1]{#1}

\bibitem[{Ben-David, Eiron, and Long(2003)}]{ben2003difficulty}
Ben-David, S.; Eiron, N.; and Long, P.~M. 2003.
\newblock On the difficulty of approximately maximizing agreements.
\newblock \emph{Journal of Computer and System Sciences}, 66(3): 496--514.

\bibitem[{Ben-Or(1983)}]{ben1983lower}
Ben-Or, M. 1983.
\newblock Lower bounds for algebraic computation trees.
\newblock In \emph{Proceedings of the 15th Annual ACM Symposium on Theory of
  Computing}, 80--86.

\bibitem[{Bradley(1997)}]{bradley1997use}
Bradley, A.~P. 1997.
\newblock The use of the area under the {ROC} curve in the evaluation of
  machine learning algorithms.
\newblock \emph{Pattern Recognition}, 30(7): 1145--1159.

\bibitem[{Brefeld and Scheffer(2005)}]{brefeld2005auc}
Brefeld, U.; and Scheffer, T. 2005.
\newblock {AUC} maximizing support vector learning.
\newblock In \emph{Proceedings of the ICML 2005 workshop on {ROC} Analysis in
  Machine Learning}.

\bibitem[{Breiman(2001)}]{breiman2001random}
Breiman, L. 2001.
\newblock Random forests.
\newblock \emph{Machine learning}, 45(1): 5--32.

\bibitem[{Calders and Jaroszewicz(2007)}]{calders2007efficient}
Calders, T.; and Jaroszewicz, S. 2007.
\newblock Efficient {AUC} optimization for classification.
\newblock In \emph{European Conference on Principles of Data Mining and
  Knowledge Discovery}, 42--53. Springer.

\bibitem[{Chang and Lin(2011)}]{chang2011libsvm}
Chang, C.-C.; and Lin, C.-J. 2011.
\newblock {LIBSVM}: {A} library for support vector machines.
\newblock \emph{ACM Transactions on Intelligent Systems and Technology (TIST)},
  2(3): 1--27.

\bibitem[{Cl{\'e}men{\c{c}}on et~al.(2008)Cl{\'e}men{\c{c}}on, Lugosi, Vayatis
  et~al.}]{clemenccon2008ranking}
Cl{\'e}men{\c{c}}on, S.; Lugosi, G.; Vayatis, N.; et~al. 2008.
\newblock Ranking and empirical minimization of {U}-statistics.
\newblock \emph{The Annals of Statistics}, 36(2): 844--874.

\bibitem[{Cohen, Schapire, and Singer(1998)}]{order1999}
Cohen, W.~W.; Schapire, R.~E.; and Singer, Y. 1998.
\newblock Learning to order things.
\newblock In \emph{NIPS}, 451--457.

\bibitem[{Cortes and Mohri(2004)}]{cortes2004auc}
Cortes, C.; and Mohri, M. 2004.
\newblock {AUC} optimization vs. error rate minimization.
\newblock In \emph{NIPS}, 313--320.

\bibitem[{Cortes and Vapnik(1995)}]{cortes1995support}
Cortes, C.; and Vapnik, V. 1995.
\newblock Support-vector networks.
\newblock \emph{Machine Learning}, 20(3): 273--297.

\bibitem[{Couronn{\'e}, Probst, and Boulesteix(2018)}]{couronne2018random}
Couronn{\'e}, R.; Probst, P.; and Boulesteix, A.-L. 2018.
\newblock Random forest versus logistic regression: a large-scale benchmark
  experiment.
\newblock \emph{BMC Bioinformatics}, 19(1): 270.

\bibitem[{Dobkin and Reiss(1980)}]{dobkin1980complexity}
Dobkin, D.~P.; and Reiss, S.~P. 1980.
\newblock The complexity of linear programming.
\newblock \emph{Theoretical Computer Science}, 11(1): 1--18.

\bibitem[{Dua and Graff(2017)}]{Dua:2019}
Dua, D.; and Graff, C. 2017.
\newblock {UCI} Machine Learning Repository.

\bibitem[{Eban et~al.(2017)Eban, Schain, Mackey, Gordon, Rifkin, and
  Elidan}]{eban2017scalable}
Eban, E.; Schain, M.; Mackey, A.; Gordon, A.; Rifkin, R.; and Elidan, G. 2017.
\newblock Scalable Learning of Non-Decomposable Objectives.
\newblock In \emph{Artificial Intelligence and Statistics}, 832--840.

\bibitem[{Edelsbrunner and Guibas(1989)}]{edelsbrunner1989topologically}
Edelsbrunner, H.; and Guibas, L.~J. 1989.
\newblock Topologically sweeping an arrangement.
\newblock \emph{Journal of Computer and System Sciences}, 38(1): 165--194.

\bibitem[{Elizondo(2006)}]{elizondo2006linear}
Elizondo, D. 2006.
\newblock The linear separability problem: Some testing methods.
\newblock \emph{IEEE Transactions on Neural Networks}, 17(2): 330--344.

\bibitem[{Fawcett(2006)}]{fawcett2006introduction}
Fawcett, T. 2006.
\newblock An introduction to {ROC} analysis.
\newblock \emph{Pattern Recognition Letters}, 27(8): 861--874.

\bibitem[{Feldman et~al.(2012)Feldman, Guruswami, Raghavendra, and
  Wu}]{feldman2012agnostic}
Feldman, V.; Guruswami, V.; Raghavendra, P.; and Wu, Y. 2012.
\newblock Agnostic learning of monomials by halfspaces is hard.
\newblock \emph{SIAM Journal on Computing}, 41(6): 1558--1590.

\bibitem[{Fern{\'a}ndez-Delgado et~al.(2014)Fern{\'a}ndez-Delgado, Cernadas,
  Barro, and Amorim}]{fernandez2014we}
Fern{\'a}ndez-Delgado, M.; Cernadas, E.; Barro, S.; and Amorim, D. 2014.
\newblock Do we need hundreds of classifiers to solve real world classification
  problems?
\newblock \emph{JMLR}, 3133--3181.

\bibitem[{Ferri, Flach, and Hern{\'a}ndez-Orallo(2002)}]{ferri2002learning}
Ferri, C.; Flach, P.~A.; and Hern{\'a}ndez-Orallo, J. 2002.
\newblock Learning Decision Trees Using the Area Under the {ROC} Curve.
\newblock In \emph{ICML}, 139--146.

\bibitem[{Freund et~al.(2003)Freund, Iyer, Schapire, and
  Singer}]{freund2003efficient}
Freund, Y.; Iyer, R.; Schapire, R.~E.; and Singer, Y. 2003.
\newblock An efficient boosting algorithm for combining preferences.
\newblock \emph{JMLR}, 933--969.

\bibitem[{Freund and Schapire(1995)}]{freund1995desicion}
Freund, Y.; and Schapire, R.~E. 1995.
\newblock A desicion-theoretic generalization of on-line learning and an
  application to boosting.
\newblock In \emph{European Conference on Computational Learning Theory},
  23--37. Springer.

\bibitem[{Freund and Schapire(1999)}]{freund1999large}
Freund, Y.; and Schapire, R.~E. 1999.
\newblock Large margin classification using the perceptron algorithm.
\newblock \emph{Machine Learning}, 37(3): 277--296.

\bibitem[{Gao et~al.(2013)Gao, Jin, Zhu, and Zhou}]{gao2013one}
Gao, W.; Jin, R.; Zhu, S.; and Zhou, Z.-H. 2013.
\newblock One-pass {AUC} optimization.
\newblock In \emph{ICML}, 906--914.

\bibitem[{Gao and Zhou(2015)}]{gao2015consistency}
Gao, W.; and Zhou, Z.-H. 2015.
\newblock On the consistency of {AUC} pairwise optimization.
\newblock In \emph{Twenty-Fourth International Joint Conference on Artificial
  Intelligence}, 939–945.

\bibitem[{Gultekin et~al.(2020)Gultekin, Saha, Ratnaparkhi, and
  Paisley}]{gultekin2020mba}
Gultekin, S.; Saha, A.; Ratnaparkhi, A.; and Paisley, J. 2020.
\newblock {MBA}: mini-batch {AUC} optimization.
\newblock \emph{IEEE Transactions on Neural Networks and Learning Systems}.

\bibitem[{Hanley and McNeil(1982)}]{hanley1982meaning}
Hanley, J.~A.; and McNeil, B.~J. 1982.
\newblock The meaning and use of the area under a receiver operating
  characteristic ({ROC}) curve.
\newblock \emph{Radiology}, 143(1): 29--36.

\bibitem[{Herschtal and Raskutti(2004)}]{herschtal2004optimising}
Herschtal, A.; and Raskutti, B. 2004.
\newblock Optimising area under the {ROC} curve using gradient descent.
\newblock In \emph{ICML}, 49.

\bibitem[{Ho and Zimmerman(2006)}]{ho2006number}
Ho, C.; and Zimmerman, S. 2006.
\newblock On the number of regions in an m-dimensional space cut by n
  hyperplanes.
\newblock \emph{Australian Mathematical Society Gazette}, 33(4): 260.

\bibitem[{Joachims(2005)}]{joachims2005support}
Joachims, T. 2005.
\newblock A support vector method for multivariate performance measures.
\newblock In \emph{ICML}, 377--384.

\bibitem[{Johnson and Preparata(1978)}]{johnson1978densest}
Johnson, D.~S.; and Preparata, F.~P. 1978.
\newblock The densest hemisphere problem.
\newblock \emph{Theoretical Computer Science}, 6(1): 93--107.

\bibitem[{Kallus and Zhou(2019)}]{kallus2019fairness}
Kallus, N.; and Zhou, A. 2019.
\newblock The fairness of risk scores beyond classification: Bipartite ranking
  and the x{AUC} metric.
\newblock In \emph{NIPS}, 3433--3443.

\bibitem[{Kot{\l}owski, Dembczy{\'n}ski, and
  H{\"u}llermeier(2011)}]{kotlowski2011bipartite}
Kot{\l}owski, W.; Dembczy{\'n}ski, K.; and H{\"u}llermeier, E. 2011.
\newblock Bipartite ranking through minimization of univariate loss.
\newblock In \emph{ICML}, 1113--1120.

\bibitem[{Le et~al.(2010)Le, Smola, Chapelle, and Teo}]{le2010optimization}
Le, Q.~V.; Smola, A.; Chapelle, O.; and Teo, C.~H. 2010.
\newblock Optimization of ranking measures.
\newblock \emph{JMLR}, 1(2999): 1--48.

\bibitem[{Lee and Preparata(1984)}]{lee1984computational}
Lee, D.-T.; and Preparata, F.~P. 1984.
\newblock Computational geometry? a survey.
\newblock \emph{IEEE Transactions on Computers}, 1072--1101.

\bibitem[{Lei and Ying(2019)}]{lei2019stochastic}
Lei, Y.; and Ying, Y. 2019.
\newblock Stochastic Proximal {AUC} Maximization.
\newblock \emph{arXiv preprint arXiv:1906.06053}.

\bibitem[{Lema{{\^i}}tre, Nogueira, and Aridas(2017)}]{imbalanced-learn}
Lema{{\^i}}tre, G.; Nogueira, F.; and Aridas, C.~K. 2017.
\newblock Imbalanced-learn: A {Python} Toolbox to Tackle the Curse of
  Imbalanced Datasets in Machine Learning.
\newblock \emph{JMLR}, 18(17): 1--5.

\bibitem[{Liu et~al.(2020)Liu, Yuan, Ying, and Yang}]{liu2020stochastic}
Liu, M.; Yuan, Z.; Ying, Y.; and Yang, T. 2020.
\newblock Stochastic auc maximization with deep neural networks.
\newblock In \emph{International Conference on Learning Representations
  (ICLR)}.

\bibitem[{Liu et~al.(2018)Liu, Zhang, Chen, Wang, and Yang}]{liu2018fast}
Liu, M.; Zhang, X.; Chen, Z.; Wang, X.; and Yang, T. 2018.
\newblock Fast Stochastic {AUC} Maximization with $O(1/n)$-Convergence Rate.
\newblock In \emph{ICML}, 3189--3197.

\bibitem[{Maaten and Hinton(2008)}]{maaten2008visualizing}
Maaten, L. v.~d.; and Hinton, G. 2008.
\newblock Visualizing data using t-{SNE}.
\newblock \emph{JMLR}, 2579--2605.

\bibitem[{Menon and Williamson(2016)}]{krishna2016aditya}
Menon, A.~K.; and Williamson, R.~C. 2016.
\newblock Bipartite Ranking: a Risk-Theoretic Perspective.
\newblock \emph{JMLR}, 17(195): 1--102.

\bibitem[{Narasimhan and Agarwal(2013)}]{narasimhan2013relationship}
Narasimhan, H.; and Agarwal, S. 2013.
\newblock On the relationship between binary classification, bipartite ranking,
  and binary class probability estimation.
\newblock In \emph{NIPS}, 2913--2921.

\bibitem[{Natole, Ying, and Lyu(2018)}]{natole2018stochastic}
Natole, M.; Ying, Y.; and Lyu, S. 2018.
\newblock Stochastic proximal algorithms for {AUC} maximization.
\newblock In \emph{ICML}, 3710--3719.

\bibitem[{Nelder and Wedderburn(1972)}]{nelder1972generalized}
Nelder, J.~A.; and Wedderburn, R.~W. 1972.
\newblock Generalized linear models.
\newblock \emph{Journal of the Royal Statistical Society: Series A (General)},
  135(3): 370--384.

\bibitem[{Nguyen and Sanner(2013)}]{nguyen2013algorithms}
Nguyen, T.~T.; and Sanner, S. 2013.
\newblock Algorithms for direct 0-1 loss optimization in binary classification.
\newblock In \emph{ICML}, 1085--1093.

\bibitem[{Oshiro, Perez, and Baranauskas(2012)}]{oshiro2012many}
Oshiro, T.~M.; Perez, P.~S.; and Baranauskas, J.~A. 2012.
\newblock How many trees in a random forest?
\newblock In \emph{International Workshop on Machine Learning and Data Mining
  in Pattern Recognition}, 154--168. Springer.

\bibitem[{Pedregosa et~al.(2011)Pedregosa, Varoquaux, Gramfort, Michel,
  Thirion, Grisel, Blondel, Prettenhofer, Weiss, Dubourg, Vanderplas, Passos,
  Cournapeau, Brucher, Perrot, and Duchesnay}]{scikit-learn}
Pedregosa, F.; Varoquaux, G.; Gramfort, A.; Michel, V.; Thirion, B.; Grisel,
  O.; Blondel, M.; Prettenhofer, P.; Weiss, R.; Dubourg, V.; Vanderplas, J.;
  Passos, A.; Cournapeau, D.; Brucher, M.; Perrot, M.; and Duchesnay, E. 2011.
\newblock Scikit-learn: Machine Learning in {P}ython.
\newblock \emph{JMLR}, 12: 2825--2830.

\bibitem[{Probst and Boulesteix(2017)}]{probst2017tune}
Probst, P.; and Boulesteix, A.-L. 2017.
\newblock To tune or not to tune the number of trees in random forest.
\newblock \emph{JMLR}, 18(1): 6673--6690.

\bibitem[{Provost and Fawcett(2001)}]{provost2001robust}
Provost, F.; and Fawcett, T. 2001.
\newblock Robust classification for imprecise environments.
\newblock \emph{Machine Learning}, 42(3): 203--231.

\bibitem[{Qin, Liu, and Li(2010)}]{qin2010general}
Qin, T.; Liu, T.-Y.; and Li, H. 2010.
\newblock A general approximation framework for direct optimization of
  information retrieval measures.
\newblock \emph{Information retrieval}, 13(4): 375--397.

\bibitem[{Rakotomamonjy(2004)}]{rakotomamonjy2004optimizing}
Rakotomamonjy, A. 2004.
\newblock Optimizing {AUC} with support vector machine.
\newblock In \emph{European Conference on Artificial Intelligence Workshop on
  {ROC} Curve and AI}, 469--478.

\bibitem[{Rudin and Schapire(2009)}]{rudin2009margin}
Rudin, C.; and Schapire, R.~E. 2009.
\newblock Margin-based ranking and an equivalence between AdaBoost and
  RankBoost.
\newblock \emph{JMLR}, 2193--2232.

\bibitem[{Rudin and Wang(2018)}]{rudin2018direct}
Rudin, C.; and Wang, Y. 2018.
\newblock Direct Learning to Rank And Rerank.
\newblock In \emph{Proceedings of the Twenty-First International Conference on
  Artificial Intelligence and Statistics}, volume~84, 775--783. PMLR.

\bibitem[{Steck(2007)}]{steck2007hinge}
Steck, H. 2007.
\newblock Hinge rank loss and the area under the {ROC} curve.
\newblock In \emph{European Conference on Machine Learning}, 347--358.
  Springer.

\bibitem[{Vogel et~al.(2021)Vogel, Bellet, Cl{\'e}men
  et~al.}]{vogel2021learning}
Vogel, R.; Bellet, A.; Cl{\'e}men, S.; et~al. 2021.
\newblock Learning Fair Scoring Functions: Bipartite Ranking under ROC-based
  Fairness Constraints.
\newblock In \emph{International Conference on Artificial Intelligence and
  Statistics}, 784--792. PMLR.

\bibitem[{Yan et~al.(2003)Yan, Dodier, Mozer, and
  Wolniewicz}]{yan2003optimizing}
Yan, L.; Dodier, R.~H.; Mozer, M.; and Wolniewicz, R.~H. 2003.
\newblock Optimizing classifier performance via an approximation to the
  Wilcoxon-Mann-Whitney statistic.
\newblock In \emph{ICML}, 848--855.

\bibitem[{Yang and Ying(2022)}]{yang2022auc}
Yang, T.; and Ying, Y. 2022.
\newblock AUC maximization in the era of big data and AI: A survey.
\newblock \emph{ACM Computing Surveys (CSUR)}.

\bibitem[{Ying, Wen, and Lyu(2016)}]{ying2016stochastic}
Ying, Y.; Wen, L.; and Lyu, S. 2016.
\newblock Stochastic online {AUC} maximization.
\newblock In \emph{NIPS}, 451--459.

\bibitem[{Yogananda, Murty, and Gopal(2007)}]{linear2007}
Yogananda, A.; Murty, M.~N.; and Gopal, L. 2007.
\newblock A fast linear separability test by projection of positive points on
  subspaces.
\newblock In \emph{Proceedings of the 24th International Conference on Machine
  Learning (ICML-07)}, 713--720.

\bibitem[{Zhao et~al.(2011)Zhao, Hoi, Jin, and Yang}]{zhao2011online}
Zhao, P.; Hoi, S.~C.; Jin, R.; and Yang, T. 2011.
\newblock Online {AUC} maximization.
\newblock In \emph{ICML}, 233--240.

\end{thebibliography}

\appendix
\onecolumn
\begin{center}
\textbf{Does it pay to optimize AUC? --- Supplementary Materials}    
\end{center}

\paragraph{Supplementary Outline.} \S\ref{section:proofs} presents the detailed proof of Theorem 1. Our generalization algorithm, namely AUC-opt$(\mathcal{D}, d)$ on $\mathbb{R}^d$ and its time complexity analysis are in \S\ref{section:generalization}. Baseline methods including their parameter tunings are presented in \S\ref{section:baselines}. Datasets and more experimental results are given in \S\ref{section:datasets} and \S\ref{section:experiments}, respectively.

\section{The proof of Theorem 1}
\label{section:proofs}

Recall that we have the following LAO problem and the theorem of its NP-completeness.

\begin{problem}[Linear AUC Optimization (LAO)]
Given the dataset $\mathcal{D}$ and the hypothesis class $\mathcal{H}:=\{\bm w: \bm w \in \mathbb{R}^d\}$, the LAO problem is to find a $\bm w\in \mathcal{H}$ such that the empirical AUC score is maximized, that is
\begin{small}
\begin{equation}
(\textit{LAO}) \quad \bm w^\star \in \argmax_{\bm w \in \mathcal{H}} \sum_{i=1}^{n_+} \sum_{j=1}^{n_-} \frac{{\bm 1} \left[\bm w^\top \bm x_i^{+}> \bm w^\top \bm x_j^{-} \right]}{n_+ n_-}, 
\end{equation}
\end{small}
where the indicator ${\bm 1}\left[A\right]=1$ if $A$ is true 0 otherwise.
\end{problem}

\begin{theorem}[NP-complete of LAO]
Consider the linear AUC optimization problem defined in Problem \ref{problem:lao-problem}, if $\mathcal{D}$ is not linearly separable, LAO is NP-complete when both $n$ and $d$ are arbitrary.
\end{theorem}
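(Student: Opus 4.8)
The plan is to prove NP-completeness of the feasibility problem (Problem~\ref{problem:flao}) directly, since a problem is NP-complete exactly when its associated decision version is both in NP and NP-hard. First I would establish membership in NP. Given a candidate $\bm w \in \mathcal{H}$ as a certificate, a verifier can evaluate the double sum $\sum_{i=1}^{n_+}\sum_{j=1}^{n_-} {\bm 1}[\bm w^\top \bm x_i^+ > \bm w^\top \bm x_j^-]$ by computing each of the $n_+ n_-$ inner-product comparisons in $\mathcal{O}(d)$ arithmetic operations, so the total running time is $\mathcal{O}(n_+ n_- d)$, which is polynomial in the input size. Because we may work over $\mathbb{Q}^d$, I also need to argue the certificate $\bm w$ can be taken to have polynomially bounded bit-length; this follows from standard facts about vertices of the arrangement of the separating hyperplanes, or can simply be asserted since an optimal (or feasible) $\bm w$ lies at a rational point determined by at most $d$ of the defining linear constraints.

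The core of the argument is the NP-hardness reduction from the open hemisphere problem (OH), which is NP-complete by Lemma~\ref{lemma:johnson-np-hard}. Given an OH instance consisting of $\mathcal{K} = \{\bm s_1, \dots, \bm s_t\} \subseteq \mathbb{Q}^d \cap \mathcal{S}^{d-1}$ and a target $m$, I would construct an LAO feasibility instance as follows. Fix a single negative point $\bm x_1^- := \bm 0$ with label $y_1^- = -1$, and for each $\bm s_i$ define a positive point $\bm x_i^+ := \bm s_i + \bm x_1^- = \bm s_i$ with label $y_i^+ = +1$, yielding $\mathcal{D} = \{(\bm x_1^+, +1), \dots, (\bm x_t^+, +1), (\bm x_1^-, -1)\}$. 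This is precisely the construction of~(\ref{equ:baojian-equ}). The key observation driving the reduction is the identity $\bm w^\top \bm x_i^+ > \bm w^\top \bm x_1^-$ if and only if $\bm w^\top (\bm x_i^+ - \bm x_1^-) > 0$, i.e. if and only if $\langle \bm w, \bm s_i \rangle > 0$. Hence the left-hand side of~(\ref{inequ:feasbility-lao}), namely $\sum_{i=1}^{t} {\bm 1}[\bm w^\top \bm x_i^+ > \bm w^\top \bm x_1^-]$, counts exactly the number of $\bm s_i$ lying in the open hemisphere of $\bm w$. Setting the LAO threshold $t := m$ (using the problem's parameter) makes the two feasibility questions identical: there is a $\bm w$ satisfying at least $m$ of the LAO inequalities if and only if there is a $\bm w$ whose open hemisphere contains at least $m$ points of $\mathcal{K}$. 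The construction uses $n_+ = t$ positives and $n_- = 1$ negative and is computable in time linear in the OH input, so it is a polynomial-time many-one reduction.

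The main obstacle I anticipate is twofold. First, the theorem statement assumes $\mathcal{D}$ is \emph{not linearly separable}, so I must verify that the hardness survives this restriction: if the constructed $\mathcal{D}$ happened to be separable, OH would be trivially solvable on that instance. I would handle this by noting that the separable case of $\mathcal{D}$ corresponds exactly to the case where all of $\mathcal{K}$ fits in one open hemisphere (achievable AUC $=1$), which is decidable in polynomial time by linear programming as discussed in \S\ref{section:problem-formulation}; the NP-hard instances of OH are precisely those where no hemisphere captures everything, so the reduction naturally targets non-separable $\mathcal{D}$ and the restriction is harmless. Second, I must connect the feasibility problem back to the optimization problem LAO itself: an affirmative answer to the optimization (finding $\bm w^\star$ maximizing the AUC objective) lets me read off feasibility for any threshold $t$, and conversely a feasibility oracle lets me binary-search over $t \in \{0, 1, \dots, n_+ n_-\}$ to locate the optimum, so the two versions are polynomial-time equivalent and NP-completeness transfers. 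With these two points addressed, combining NP membership with the reduction yields that the LAO feasibility problem, and hence LAO, is NP-complete whenever $n$ and $d$ are both unbounded.
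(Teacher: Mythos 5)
Your proposal is correct and takes essentially the same route as the paper: the identical reduction from the open hemisphere problem (a single negative point at the origin, positives $\bm x_i^+ = \bm s_i$, threshold set to $m$), combined with a polynomial-time verifier and a polynomial-bit-length certificate argument for membership in NP. The only cosmetic difference is that you justify the certificate's encoding length via hyperplane-arrangement/LP facts where the paper instead invokes NP membership of the spherical separation problem (Lemma~\ref{lemma:ss-np}); your extra remarks on the non-separability restriction and on the feasibility-versus-optimization equivalence are sound points that the paper leaves implicit.
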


To prove the NP-completeness, we reformulate the LAO problem to its feasibility problem and then prove its NP-completeness. Before we prove Theorem \ref{thm:np-complete-lao}, let us introduce the spherical separation problem and a corresponding lemma, which shows that the spherical separation problem is in NP.

\begin{problem}[Spherical Separation (SS)]
Given two sets of points $\mathcal{U} :=\{\bm u_1,\bm u_2,\ldots, \bm u_n\}\subseteq \mathcal{S}^{d-1} \cap \mathbb{R}^d$ and $\mathcal{V} :=\{ \bm v_1, \bm v_2,\ldots, \bm v_m\}\subseteq \mathcal{S}^{d-1} \cap \mathbb{R}^d$. Does there exist a hyperplane $\bm w^\prime$ such that $\mathcal{U}$ is separable from $\mathcal{V}$?  \label{prob:ss}
\end{problem}

\begin{lemma}[\cite{dobkin1980complexity}] SS problem defined above is in NP, i.e., SS $\in $ NP.
\label{lemma:ss-np}
\end{lemma}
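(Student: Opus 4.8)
The plan is to show $\mathrm{SS}\in\mathrm{NP}$ in the standard way: exhibit a certificate of polynomial length together with a deterministic verifier running in polynomial time. The natural certificate for a ``yes'' instance is a separating hyperplane $\bm w^\prime$ itself (together with a threshold $b$, which in the hemisphere setting may be taken to be $0$). Given such a witness, the verifier checks the $n+m$ inequalities $\langle \bm w^\prime, \bm u_i\rangle > b$ for all $i$ and $\langle \bm w^\prime, \bm v_j\rangle < b$ for all $j$; each inner product costs $\mathcal{O}(d)$ arithmetic operations on the rational coordinates of the input, so verification runs in $\mathcal{O}((n+m)d)$ time. The only nontrivial point is therefore to argue that whenever a separating hyperplane exists at all, there exists one whose bit-length is polynomially bounded in the size of the input; otherwise the certificate would not have polynomial length and NP-membership would not follow.

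First I would recast separability as a linear feasibility problem. Writing the desired conditions as $\langle \bm w^\prime, \bm u_i\rangle \ge \delta$ and $\langle \bm w^\prime, \bm v_j\rangle \le -\delta$ for a margin variable $\delta$, an instance is a ``yes'' instance precisely when this finite system of linear inequalities in the unknowns $(\bm w^\prime, \delta)$ admits a solution with $\delta > 0$ (an affine threshold, if desired, is absorbed by the usual lifting to one extra coordinate). Since strict separation is invariant under positive scaling of $\bm w^\prime$, I may normalise and reduce to testing feasibility of a system whose coefficients are rationals of bit-length controlled by the input coordinates.

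The key step is to bound the size of a witness using the polyhedral theory of linear programming. If the feasibility system is consistent, its solution set is a nonempty rational polyhedron, and a basic (vertex) solution of an appropriate bounded restriction has coordinates expressible, via Cramer's rule, as ratios of $d\times d$ subdeterminants of the constraint matrix. Hadamard's inequality bounds each such determinant, so these coordinates have numerators and denominators of bit-length polynomial in $d$ and in the bit-length $L$ of the input. Hence there is a certificate $\bm w^\prime$ (and rational $\delta$) describable in $\mathrm{poly}(n,m,d,L)$ bits, which the verifier checks in polynomial time, giving $\mathrm{SS}\in\mathrm{NP}$.

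The hard part is precisely this bit-length bound: NP-membership is \emph{not} automatic for a geometric existence statement over $\mathbb{R}^d$, since a priori a separating hyperplane might appear to require very-high-precision coordinates. The resolution is the vertex-solution argument above, together with one further subtlety, namely that the separation is \emph{strict}. I would handle strictness by treating the margin $\delta$ as a variable and observing that if the system is feasible for some $\delta>0$, then it is feasible for some rational $\delta$ bounded below by $2^{-\mathrm{poly}(L)}$, so the witnessing $\delta$ (and hence $\bm w^\prime$) still has polynomial description length. Up to these standard estimates from linear-programming theory, the lemma follows.
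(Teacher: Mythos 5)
Your proof is correct, but it is worth flagging that the paper itself contains no proof of this lemma at all: the statement is imported wholesale as a cited result from Dobkin and Reiss (1980) and then used as a black box inside Claim~1, precisely to argue that the certificate $\bm w^\prime$ for the feasibility version of LAO can be encoded in polynomially many bits. What you have written is, in effect, a self-contained reconstruction of the content of that citation: the recasting of strict separability as a linear feasibility system with a margin variable $\delta$, the normalization that makes the feasible region a bounded rational polyhedron, the Cramer's-rule/Hadamard bound showing a basic solution has coordinates of bit-length polynomial in the input size $L$, and the resulting $2^{-\mathrm{poly}(L)}$ lower bound on an achievable rational margin that handles strictness. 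You also correctly isolate the one genuinely non-trivial point --- NP membership is not automatic for an existential statement over $\mathbb{R}^d$, since a priori a witness could require unbounded precision --- which is exactly the issue the paper silently delegates to the reference. The trade-off is the usual one: the paper's route buys brevity and defers the bit-complexity machinery to the literature, while your argument makes the downstream step of the paper's NP-membership claim for F-LAO checkable without consulting Dobkin--Reiss. One small caveat: the paper states SS over $\mathcal{S}^{d-1}\cap\mathbb{R}^d$, which is not a meaningful input domain for a Turing machine; your tacit restriction to rational coordinates is the reading the paper must intend (it works in $\mathbb{Q}^d$ elsewhere), so this is a defect of the paper's problem statement rather than a gap in your proof.
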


\begin{proof}
Without loss of the generality, let us consider the LAO problem in $\mathcal{H} := \mathbb{Q}^d$. Our first step is to reformulate the LAO problem as a feasibility problem (F-LAO) shown in Equation~(\ref{inequ:feasbility-lao1}). The goal is to prove this feasibility problem is NP-complete. It has two steps: 1) to show F-LAO is in NP (i.e., nondeterministic polynomial-time); and 2) to prove F-LAO is NP-hard.
\begin{center}
 \begin{minipage}{.47\textwidth}
\begin{tcolorbox}[colback=white,left=0pt,right=0pt,top=0pt,bottom=0pt,outer arc=0pt,arc=0pt]
\textbf{Feasibility problem of LAO (F-LAO)\quad} 

\textbf{Input:} Given a finite dataset $\mathcal{D}$ and a set of linear functions $\mathcal{H}$ and positive integers $d, t$.

\textbf{Question:} Does there exist $ \bm w \in \mathcal{H}$ such that 
\begin{equation}
\sum_{i=1}^{n_+} \sum_{j=1}^{n_-} {\bm 1} \left[\bm w^\top \bm x_{i}  > \bm w^\top \bm x_{j} \right] \geq t \ ? \label{inequ:feasbility-lao1}
\end{equation}
\end{tcolorbox}
\end{minipage}\quad\quad\qquad
\begin{minipage}{.42\textwidth}\centering

\includegraphics{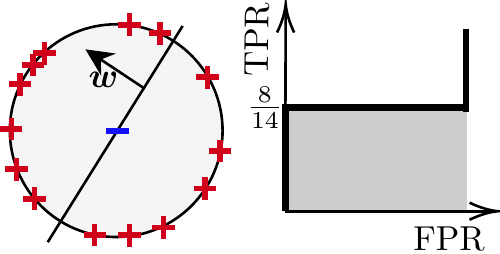}
\captionof{figure}{Left: An instance of OH. Right: the AUC curve of an instance of LAO.}
\label{fig:illustration-lao-oh1}
\end{minipage}  
\end{center}

\paragraph{Claim 1: F-LAO is in NP.} To prove it is in NP, we need to show that 1) there exists an algorithm (verifier) verifying certificates in polynomial time; and 2) the certificate length is polynomial in the input size. For the first step, given any $\bm w \in \mathcal{H}$, one can find a polynomial-time verifier such that it finishes in $\mathcal{O}(n_p n_q d)$ time. To do so, we propose a polynomial-time verifier presented as in Algo. \ref{alg:verifier}, which finishes in polynomial-time, i.e. $\mathcal{O}(n_p n_q d)$. The rest is to show the certificate $\bm w$ could be encoded in polynomial with respect to the input size. All we need to do is to show that there exists another solution $\bm w^\prime$ such that $\bm w^\prime$ could be encoded in polynomial with respect to the input size. Let $\mathcal{U} := \{\bm x_{i}^+: {\bm w^\prime}^\top \bm x_{i}^+ > {\bm w^\prime}^\top \bm x_{j}^- \}$ and $\mathcal{V} := \{\bm x_j^-: {\bm w^\prime}^\top \bm x_{i}^+ <= {\bm w^\prime}^\top \bm x_{j}^-\}$. To find such $\bm w^\prime$ is equivalent to solve SS shown in Problem \ref{prob:ss}. By Lemma \ref{lemma:ss-np} and the definition of $\mathcal{U}$, $\mathcal{V}$, we see that such $\bm w^\prime$ can be encoded in polynomial length with respect to the input since SS problem is in NP. Therefore, $\bm w = \bm w^\prime$ can also be encoded in polynomial.

\begin{algorithm}[t]
\textbf{Input: } $\mathcal{D}, d, t$ and the certificate $\bm w$
\caption{Verifier$(\left\langle\mathcal{D}, d, t \right\rangle, \bm w )$}
\begin{algorithmic}[1]
\STATE $c \leftarrow 0$ \hfill $\triangleright$ The counter of the number of inequalities
\FOR{$(i, j) \in \{1,\ldots, n_+\} \times \{1,\ldots, n_-\}$} 
\IF{$ \bm w^\top \bm x_{i}^+  >  \bm w^\top \bm x_{j}^- $}
\STATE $ c \leftarrow c + 1$
    \IF{$c \geq t$}
    \STATE \textbf{Return } True
    \ENDIF
\ENDIF
\ENDFOR
\STATE \textbf{Return } False
\end{algorithmic}
\label{alg:verifier}
\end{algorithm}

\paragraph{Claim 2: F-LAO is NP-hard.} We need to show that given any instance of an OH problem, it can be solved by an instance of LAO problem. To do this, we show that any instance of the OH problem can be transformed into an instance of the feasibility problem defined in (\ref{inequ:feasbility-lao1}). First of all, we need to construct a training dataset $\mathcal{D}$ so that an instance of Problem (\ref{equ:linear-auc-posi}) can be defined. Notice that we can rewrite vectors in $\mathcal{D}$ and construct new vectors $\bm x_{i}^+$ and $\bm x_{1}^-$ as the following
\begin{align*}
\bm s_1 = \underbrace{(\bm s_1 + \bm x_{1}^-)}_{\bm x_{1}^+} - \bm x_1^-, \bm s_2 = \underbrace{(\bm s_2 + \bm x_1^-)}_{\bm x_{2}^+} - \bm x_1^-, \ldots, \bm s_t = \underbrace{(\bm s_t + \bm x_1^-)}_{\bm x_{t}^+} - \bm x_1^-.
\end{align*}
We construct the training labels such that $y_{1}^+, y_{2}^+, \ldots, y_{t}^+$ are 1, and $y_{1}^-=-1$, then the constructed $\mathcal{D}:=\{(\bm x_{1}^+,y_{1}^+),\ldots, (\bm x_{t}^+,y_{t}^+), (\bm x_1^-,y_{1}^-)\}$. Fig. \ref{fig:illustration-lao-oh1} illustrates this correspondence where each positive sign is $\bm s_i$, and the negative sign is $\bm x_1^- = \bm 0$. The OH problem has 14 inequalities shown on the left while the normal vector $\bm w$ defines a hyperplane which corresponds to $\sum_{i=1}^{n_+} \sum_{j=1}^{n_-} {\bm 1}[\bm w^\top \bm x_{i}^+ > \bm w^\top \bm x_{j}^-] = 8$ as shown on the right of Fig. \ref{fig:illustration-lao-oh1}. To check whether there exists $\bm w$ so that the number of inequalities can be satisfied is at least $m$, one only needs to check whether there exists a $\bm w$ and $t = m$ such that~(\ref{inequ:feasbility-lao1}) is satisfied. This transformation can be done in polynomial time. Thus any solution of this instance of the LAO problem defined by $\mathcal{D}$ is a solution of OH. Therefore, the feasibility problem is NP-hard.
\end{proof} 

\paragraph{The implementation of AUC-opt on $\mathbb{R}^2$.} We implemented AUC-opt by using C11 with a Python-3.7 wrapper. As described in Algorithm 1, AUC-opt has a parameter $\epsilon$. This parameter is a small enough constant so that it is smaller than the minimal difference between any two consecutive slopes; that is, $\epsilon$ satisfies
\begin{equation}
\epsilon < \min_{(i,j) \ne (i^\prime,j^\prime)} \left| \frac{ x_{i2} -  x_{j2}}{ x_{i1} - x_{j1}} -\frac{ x_{i^\prime2} - x_{j^\prime2}}{ x_{i^\prime1} - x_{j^\prime1}} \right|. \nonumber
\end{equation}
In our implementation, we dynamically set it as the half of the difference between any two consecutive slopes. In other words, we sort the slope by ascending order, $\ldots,s_i,s_{i+1},\ldots$ where each $\epsilon_i = (s_{i+1} - s_i) / 2$.

\section{Generalization to $\mathbb{R}^d$: AUC-opt$(\mathcal{D}, d)$}
\label{section:generalization}

The idea of projecting the original problem into lower-dimensional subproblems is mainly inspired by \citet{johnson1978densest}. To simplify our problem, we can map pair of points $(\bm x_i^+, \bm x_j^-)$ onto sphere $\mathcal{S}^{d-1}$.  Let us consider the function $f_{i,j}(\bm w) := \bm w^\top (\bm x_i^+ - \bm x_j^-) / \|\bm x_i^+ - \bm x_j^-\|_2$.\footnote{Notice that any pair of points such that $\bm x_i^+ = \bm x_j^-$ is not interesting since it never contributes to the AUC score. Here we assume $\bm x_i^+ \ne \bm x_j^-$ for $i = 1, 2, \ldots, n_+$ and $j = 1, 2, \ldots, n_-$.} Our goal is essentially to find such $\bm w$ so that as many $f_{i,j} > 0$ as possible.  Define such nonzero points $\bm p_t := (\bm x_i^+ - \bm x_j^-) / \|\bm x_i^+ - \bm x_j^-\|_2$ where $t = 1, 2, \ldots, n_+ n_-$. Our LAO problem can be reformulated as the following problem: 

Given a set of nonzero points $\mathcal{K} := \{\bm p_t: \bm p_t, t = 1, 2, \ldots, n_+ n_-\}$ on $\mathcal{S}^{d-1}$, find a hyperplane $\bm w$ crossing the origin so that $\sum_{t=1}^{n_+ n_-} {\bm 1}\left[\bm w^\top \cdot \bm p_t > 0\right]$ is maximized.

Follow the idea of finding densest hemisphere from \cite{johnson1978densest}, let the hyperplane crossing the origin defined by $\bm p_t$ as $H_t := \{\bm x: \bm x^\top\cdot\bm p_t = 0, \bm x \in \mathbb{R}^d\}$. It is enough to consider $\bm w \in \cup_{t=1,\ldots,n_+n_-} H_t$ since $H_t$ divides $\mathbb{R}^d$ into at most $2 \left[ {n-1 \choose 0}, {n-1 \choose 1},\ldots, {n-1 \choose d-1} \right]$ cones \citep{ho2006number} and each cone defines an equivilent class.\footnote{For any $\bm x$ and $\bm y$ in the same cone, we have $\sum_{t=1}^{n_+ n_-} {\bm 1}\left[\bm x^\top \cdot \bm p_t > 0\right] = \sum_{t=1}^{n_+ n_-} {\bm 1}\left[\bm y^\top \cdot \bm p_t > 0\right]$.} Let $\bm w^\prime$ be a point on the face of a specific cone such that the AUC score is maximized. Since $\bm w^\prime$ is on the face, $\bm w^\prime$ must be on a specific hyperplane $H_{t^\prime}$. Therefore, it is enough to iterate all such hyperplane and solving the problem recursively on these hyperplanes. Specifically, we project points onto $H_{t^\prime}$ and then solve problem in $d-1$ dimensional subspace (changing the number of coordinates from $d$ to $d-1$). Let the projection be defined as $\bm P(\bm x) := \bm x - (\bm x^\top \cdot \bm u / \| \bm u\|^2) \cdot \bm u$ where $\bm u$ is the normal vector. The critical property of $\bm P$ is that for any $\bm x \in H(\bm u)$, $\bm x^\top \bm P(\bm p_t) = \bm x^\top \cdot \bm p_t$.  Therefore, the inner products of $\bm p_t$ with $\bm x$ are the same as those of projected ones. We summarize the proposed method as in Algo. \ref{algo:auc-opt-3d}. This recursive procedure projects points onto subspaces until the dimension reduces to 2.

\begin{algorithm}[H]
\small
\caption{$[\operatorname{AUC}_{\operatorname{opt}},\bm w^\star]=$AUC-opt($\mathcal{D},d$)}
\begin{algorithmic}[1]
\STATE $\mathcal{K} = \{(\bm x_i^+ - \bm x_j^-): i \in \{1,\ldots,n_+\},j \in \{1,\ldots,n_-\}\}$
\IF {d=2}
\STATE  \textbf{return} $\operatorname{AUC}_{cur}, \bm w^\prime = \operatorname{AUC-opt}(\mathcal{D})$ \hfill $\triangleright$ call AUC-opt
\ENDIF
\FOR{$\bm u \in \mathcal{K}$}
\STATE $\mathcal{P} = \left\{\left(\bm x - \frac{\bm x^\top \cdot \bm u}{\| \bm u\|^2} \cdot \bm u, y\right): (\bm x, y) \in \mathcal{D}\right\}$ \hfill $\triangleright$ project points of $\mathcal{D}$ onto the hyperplane defined by $\bm u$.
\STATE $\mathcal{P}^\prime = \operatorname{change\_coordinates}(\bm u, \mathcal{P})$ \hfill $\triangleright$ change coordinates so that points are presented in $d-1$ coordinates.
\STATE $\operatorname{AUC}_{cur}, \bm w^\prime  =\operatorname{AUC-opt}(\mathcal{P}^\prime, d-1)$
\STATE $ \bm w = \operatorname{change\_coordinates\_back}(\bm u, \bm w^\prime )$ \hfill $\triangleright$ change $d-1$ coordinates of $\bm w^\prime$ back to $d$ coordinates.
\IF {$\operatorname{AUC}_{\operatorname{opt}} < \operatorname{AUC}_{\operatorname{cur}}$}
\STATE $\bm w^\star = \bm w$, $\operatorname{AUC}_{\operatorname{opt}} = \operatorname{AUC}_{\operatorname{cur}}$
\ENDIF
\ENDFOR
\\\hrulefill
\STATE \textbf{procedure}\quad $\mathcal{P}^\prime =\text{change\_coordinates}(\bm u, {\mathcal{P}})$
\STATE $\mathcal{P}^\prime = \{\}$
\FOR{$\bm p \in \mathcal{P}$}
\STATE find unit vectors $\bm u_1, \bm u_2, \ldots, \bm u_{d-1}$ such that $\bm u_i ^\top \bm u = 0$.
\STATE $\mathcal{P}^\prime = \mathcal{P}^\prime \cup [\bm p^\top \bm u_1,\bm p^\top \bm u_2, \ldots, \bm p^\top \bm u_{d-1}]^\top$
\ENDFOR
\STATE \textbf{Return} $\mathcal{P}^\prime$
\\\hrulefill
\STATE \textbf{procedure}\quad $\bm w =\text{change\_coordinates\_back}(\bm u, \bm w^\prime)$
\STATE let $\bm w^\prime = [w_1^\prime, w_2^\prime, \ldots, w_{d-1}^\prime]^\top$
\STATE find unit vectors $\bm u_1, \bm u_2, \ldots, \bm u_{d-1}$ such that $\bm u_i ^\top \bm u = 0$.
\STATE $\bm w = w_1^\prime \bm u_1 + w_2^\prime \bm u_2 + \ldots w_{d-1}^\prime \bm u_{d-1}$
\STATE \textbf{Return} $\bm w$
\end{algorithmic}
\end{algorithm}

\begin{theorem}[Time complexity of AUC-opt($\mathcal{D},d$)]
Given the dataset $\mathcal{D} := \{(\bm x_i, y_i): i \in \left\{1,2,\ldots,n\right\}\}$ where $\bm x_i \in \mathbb{R}^d$ and $y_i \in \{\pm 1\}$ and a collection of linear separators $\mathcal{H}:=\{\bm w: \bm w\in\mathbb{R}^d\}$. There exists an algorithm solves the LAO problem~(\ref{equ:linear-auc-posi}) in $\mathcal{O}\left((n_+ n_-)^{d-1}\log (n_+ n_-)\right)$.
\end{theorem}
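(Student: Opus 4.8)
The plan is to establish correctness and the running time simultaneously by induction on the dimension $d$, with the base case $d=2$ supplied by Theorem~\ref{thm:theorem-run-time} (the optimality and $\mathcal{O}(n_+n_-\log(n_+n_-))$ bound of AUC-opt on $\mathbb{R}^2$). First I would recast the objective as a \emph{densest open hemisphere} count. Since $\bm w^\top \bm x_i^+ > \bm w^\top \bm x_j^-$ holds iff $\bm w^\top(\bm x_i^+ - \bm x_j^-) > 0$ iff $\bm w^\top \bm p_t > 0$ for the normalized difference $\bm p_t := (\bm x_i^+ - \bm x_j^-)/\|\bm x_i^+ - \bm x_j^-\|_2$, the LAO optimum equals $\max_{\bm w} \sum_{t=1}^{N} \bm 1[\bm w^\top \bm p_t > 0]$ with $N := n_+ n_-$ (pairs with $\bm x_i^+ = \bm x_j^-$ contribute nothing and are discarded). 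This is exactly the formulation of \citet{johnson1978densest} that Algorithm~\ref{algo:auc-opt-3d} solves recursively.

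The heart of the correctness argument is the dimension-reduction step. Consider the central hyperplanes $H_t := \{\bm x : \bm x^\top \bm p_t = 0\}$; the count $\sum_t \bm 1[\bm w^\top \bm p_t > 0]$ is constant on every open cone of their arrangement, so each cone is one equivalence class. I would show that some maximizer $\bm w^\star$ may be taken on one of the $H_{t'}$ by a rotation argument: if an optimal normal lies in the interior of a cone, rotate it toward an exterior point while keeping every strictly counted point strictly inside, stopping the instant the bounding hyperplane first meets some $\bm p_{t'}$; the count cannot drop, because the point reaching the boundary was not counted, so $\bm w^{\star\top}\bm p_{t'} = 0$. Next I would invoke the stated projection property: for $\bm w \in H_{t'}$ we have $\bm w \perp \bm p_{t'}$, hence $\bm w^\top \bm p_t = \bm w^\top \bm P(\bm p_t)$, where $\bm P$ is the orthogonal projection onto $H_{t'}$, because the component of $\bm p_t$ discarded by $\bm P$ is parallel to $\bm p_{t'}$. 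Thus restricting the search to $H_{t'}$ and replacing each point by its projection preserves every relevant sign, and after an orthonormal change of coordinates this is a genuine $(d-1)$-dimensional instance on at most $N$ projected pairs. Therefore the global optimum equals the maximum, over the $N$ choices of $t'$, of the optima of the $(d-1)$-dimensional subproblems, and correctness follows from the inductive hypothesis.

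For the running time, let $T(d)$ denote the cost on $N$ pairs. The base case gives $T(2) = \mathcal{O}(N \log N)$ by Theorem~\ref{thm:theorem-run-time}. In the recursive case the algorithm iterates over the $N$ hyperplanes, and for each it performs a projection and coordinate change costing $\mathrm{poly}(n,d)$, which is $\mathcal{O}(N)$ for fixed $d$ since $n = \mathcal{O}(N)$, followed by one recursive call; hence $T(d) = N\bigl(T(d-1) + \mathcal{O}(N)\bigr)$. Unrolling gives $T(d) = N^{d-2} T(2) + \mathcal{O}(N^{d-1}) = \mathcal{O}(N^{d-1}\log N) = \mathcal{O}\bigl((n_+ n_-)^{d-1}\log(n_+ n_-)\bigr)$, as claimed.

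I expect the main obstacle to be the rotation/face argument: making precise that one can rotate the optimal normal until its bounding hyperplane first touches a point without ever pushing a counted point across the boundary, and doing so robustly when the $\bm p_t$ lie in degenerate position (several on a common great subsphere, or projections coinciding). I would address these through a general-position assumption together with a symbolic-perturbation tie-breaking rule, mirroring the densest-hemisphere analysis of \citet{johnson1978densest}; the projection lemma and the recurrence are then routine.
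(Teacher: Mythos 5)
Your proposal retraces the paper's own argument almost step for step: the reformulation as a densest open hemisphere problem on the normalized differences $\bm p_t$, the recursion over the hyperplanes $H_{t'}$ justified by the orthogonal-projection identity, and the recurrence $T(d)=N\bigl(T(d-1)+\mathcal{O}(N)\bigr)$ with base case Theorem~\ref{thm:theorem-run-time} are exactly what the paper does, and your run-time bookkeeping is fine. The genuine gap is the step you yourself flag as the main obstacle: the structural lemma that some maximizer of the \emph{strict} count $\#\{t:\bm w^\top\bm p_t>0\}$ can be taken to lie exactly on one of the hyperplanes $H_{t'}$. That lemma is false, and no general-position assumption or tie-breaking perturbation of the points rescues it. Take $d=3$, $n_+=4$, $n_-=1$, $\bm x_1^-=\bm 0$, and difference vectors $\bm p_1=\bm e_1$, $\bm p_2=\bm e_2$, $\bm p_3=\bm e_3$, $\bm p_4=-(\bm e_1+\bm e_2+\bm e_3)/\sqrt{3}$ (any three are linearly independent, no two are parallel, and the configuration is stable under small perturbations). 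The optimal count is $3$, attained by $\bm w=(1,1,1)^\top$; but every $\bm w$ orthogonal to some $\bm p_{t'}$ counts at most $2$: on $H(\bm p_1)$ the point $\bm p_1$ can never be counted and $\bm p_2,\bm p_3,\bm p_4$ cannot be simultaneously strictly positive there (symmetrically for $H(\bm p_2),H(\bm p_3)$), while on $H(\bm p_4)$ the constraint $w_1+w_2+w_3=0$ forbids $w_1,w_2,w_3>0$. Your rotation argument breaks precisely here: the optimal cone is the open positive octant, the hyperplane $H(\bm p_4)$ meets its closure only at the origin, so \emph{every} rotation of the optimal normal that reaches some $H_{t'}$ first pushes a \emph{counted} point onto the boundary; ``keep all counted points strictly inside'' and ``stop when the boundary meets some $\bm p_{t'}$'' are incompatible demands, not a degeneracy issue.

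What is true---and what a correct treatment of the open (as opposed to closed) hemisphere problem requires---is a limiting statement: there is an optimal $\bm w$ of the form $\bm w_0+\varepsilon\,\bm p_{t'}$ with $\bm w_0\in H_{t'}$ and $\varepsilon>0$ arbitrarily small. Under that lemma, the recursive subproblem on $H_{t'}$ must be solved with an additive credit of $+1$ for every pair whose difference is a positive multiple of $\bm p_{t'}$ (at the very least the pair defining $\bm p_{t'}$ itself, which always projects to $\bm 0$ and is silently dropped by the projection); in the example above this credit turns the value $2$ found on $H(\bm p_1)$ into the correct $3$, witnessed by $\bm w=(\varepsilon,1,1)^\top$. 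You should be aware that this is not a defect of your write-up alone: the paper's own proof asserts exactly the step you could not prove (``let $\bm w^\prime$ be a point on the face of a specific cone such that the AUC score is maximized'') without justification, and Algorithm~\ref{algo:auc-opt-3d} as written implements the uncorrected recursion, so on the instance above it returns AUC $1/2$ rather than the optimum $3/4$. So your instinct about where the difficulty lies is exactly right, but the remedy is not general position plus symbolic perturbation of the points; it is to restate the lemma in the perturbed form and to modify the recursion---in the algorithm as well as in the induction---to award the parallel-pair credit, after which your induction and the $\mathcal{O}\bigl((n_+n_-)^{d-1}\log(n_+n_-)\bigr)$ recurrence go through.
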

\begin{proof}
The correctness of the proposed algorithm has been described above. The main loop of Algo. \ref{algo:auc-opt-3d} has $|\mathcal{K}| = n_+ n_-$ iterations in total where each iteration complexity is the total complexity of AUC-opt($\mathcal{P}^\prime, d - 1$). Clearly, the total run time of AUC-opt($\mathcal{P}^\prime, d$) is $\mathcal{O}((n_+ n_-)\cdot (n_+ n_-) \cdots (n_+ n_-) \log (n_+ n_-)) = \mathcal{O}\left((n_+ n_-)^{d-1}\log (n_+ n_-)\right)$.
\end{proof}
\begin{remark}
Our method essentially follows the idea of \cite{johnson1978densest} where we map interesting points $(\bm x_i^+ - \bm x_j^-)$ onto the sphere $\mathcal{S}^{d-1}$ and then solve the problem recursively. However, at the lowest level $d=2$, we call AUC-opt to obtain the best classifier for getting the best AUC score.
\end{remark}

\section{Descriptions of baseline methods}
\label{section:baselines}

\paragraph{Random Forest (RF).} For each specific Random Forest (RF) classifier \citep{breiman2001random}, there are two important parameters: 1) the number of trees; and 2) the number of features. Two common strategies are used to choose the number of features: 1) the number of trees is set to $\log(d)+1$ and 2) the number of features is set to $\sqrt{d}$ where we choose $\sqrt{d}$. The number of tree chosen in our experiments are from $\{10, 50, 100, 150, 200, 300, 400, 500\}$. Notice that the number of trees does not necessarily increase the AUC score as reported in \cite{oshiro2012many}. Both \cite{oshiro2012many} and \cite{probst2017tune} show that the performance of AUC score has no big gain after 128 trees. We use the implementation of RF from the sklearn package \citep{scikit-learn}.

\paragraph{Balanced RF (B-RF).} B-RF is an RF classifier where the class weight is set to be ``balanced'' while the other parameter setting is the same as RF's.

\paragraph{RankBoost.} The RankBoost algorithm is proposed in \citet{freund2003efficient}. There are different strategies for selecting weak learners. As suggested in \citet{cortes2004auc}, we choose the third method for selecting the weak learners. Because there are no missing values in our datasets, the parameter $q_{\text{def}}$  of weak learners defined is set to $0$. The only parameter for RankBoost is the number of iterations which is tuned from $\{10, 50, 100, 200, 300, 500, 800, 1000\}$. We implement RankBoost in C11 language with a Python-3.7 wrapper. 

\paragraph{AdaBoost.} The AdaBoost method \citep{freund1995desicion} has one important parameter which is the maximum number of estimators at which boosting is terminated. Specifically, this parameter is chosen from $\{2, 3, 4, 5, 6, 10, 50, 100, 150, 200\}$ while the default parameter is 50. 

\paragraph{Logistic Regression (LR).} For the logistic regression classifier \citep{nelder1972generalized}, there is an $\ell_2$ regularization parameter. We did not select $\ell_1$ regularization is because, in almost all of the datasets, the number of training samples is much larger than the data dimension. Hence, the models are assumed to be dense. The parameter space we tried is from $c\in \left[10^{-6},10^6\right]$ by using 5-fold cross-validation.

\paragraph{Balanced LR (B-LR).} B-LR is the LR classifier when the class weight is set to be ``balanced''.

\paragraph{Support Vector Machine (SVM).} For the Support Vector Machine (SVM) classifier \citep{cortes1995support}, we use the standard setting of linear SVM where the parameter of $\ell_2$-regularization has been tuned from $c \in [10^{-6},10^6]$. 

\paragraph{Balanced SVM (B-SVM).} B-SVM is the SVM classifier when the class weight is set to be ``balanced''.

\paragraph{Gradient Boosting (GB).} The GB algorithm is an additive model where a single regression tree is induced. Similar to AdaBoost, we tune the number of estimators from $\{10, 50, 100, 200, 300, 500, 800, 1000\}$. 

\paragraph{SVM with Radial Basis Function kernel (SVM-RBF).} The SVM-RBF classifier is the support vector classification method using the nonlinear kernel, namely the Radial Basis Function (RBF) kernel. The parameter $c$ is tuned from $[10^{-6},10^6]$. The degree is set to a default value of 3. 

\paragraph{Balanced SVM-RBF.} B-SVM-RBF is the SVM-RBF classifier when the class weight is set to be ``balanced''.

\paragraph{SVM-Perf.} The SVM-Perf method is proposed in \citet{joachims2005support} where the goal is to minimize the AUC loss by using the SVM-based method. We downloaded the code from \url{http://www.cs.cornell.edu/people/tj/svm_light/svm_perf.html}. The parameter $c$ is tuned from $[2^{-20},2^{11}]$. Other parameters are set to default values. 

\paragraph{Stochastic Proximal AUC maximization (SPAUC).} The SPAUC maximization algorithm is proposed in \citet{lei2019stochastic}. SPAUC has two parameters: 1) The learning rate $\eta_t := 2/(\mu t + 1)$, which involves a parameter $\mu$ where it is tuned from $\mu \in [10^{-7}, 10^{-2.5}]$ as suggested in the paper; and 2) We use the $\ell^2$-norm as the regularization where the regularization parameter is tuned from $[10^{-5}, 10^5]$. The maximal number of epochs is fixed to 100. The Python interface of the SPAUC algorithm is as the following. We implemented SPAUC by using C11 with a Python-3.7 wrapper.

\paragraph{SPAM.} The Stochastic Proximal AUC Maximization (SPAM) algorithm is proposed in \citet{natole2018stochastic}. We use $\ell_2$ regularization where the regularization parameter is tuned from $[10^{-5}, 10^5]$ as suggested in the paper. Since there exists a constant coefficient $c$ over the learning rate, i.e. $c/\sqrt{t}$, we tune it from $[1,100]$. The maximal number of epochs is fixed to 100. The interface of SPAM is as the following. We implemented SPAM by using C11 with a Python-3.7 wrapper. 

\clearpage
\section{Datasets}
\label{section:datasets}
\def\arraystretch{1.5}%
\begin{center}
\begin{table}[t]
\caption{Binary classification datasets. Datasets are collected from \citet{imbalanced-learn,chang2011libsvm,Dua:2019}. For dataset x[y] in each dataset name column, x is a multi-classification dataset.  For such dataset, x[y] means samples of label y will be treated as positive samples otherwise negative samples.}
\centering
\begin{tabular}{P{0.15\textwidth}|P{0.1\textwidth}|P{0.1\textwidth}|P{0.1\textwidth}|P{0.15\textwidth}|P{0.1\textwidth}|P{0.1\textwidth}|P{0.1\textwidth}}
\hline\hline
dataset name & $n$ & $d$ & $\displaystyle n_p /n$ & dataset name & $n$ & $d$ & $\displaystyle n_p /n$ \\\hline 
splice & 3175 & 60 & 0.5191 & vowel[hid] & 990 & 10 & 0.0909 \\\hline 
mushrooms & 8124 & 112 & 0.4820 & spectrometer & 531 & 93 & 0.0847 \\\hline 
banana & 5300 & 2 & 0.4483 & cardio[3] & 2126 & 21 & 0.0828 \\\hline 
australian & 690 & 14 & 0.4449 & car-eval[34] & 1728 & 21 & 0.0775 \\\hline 
spambase & 4601 & 57 & 0.3940 & isolet & 7797 & 617 & 0.0770 \\\hline 
ionosphere & 351 & 34 & 0.3590 & us-crime & 1994 & 100 & 0.0752 \\\hline 
fourclass & 862 & 24 & 0.3561 & yeast[ml8] & 2417 & 103 & 0.0736 \\\hline 
breast-cancer & 683 & 10 & 0.3499 & scene & 2407 & 294 & 0.0735 \\\hline 
pima & 768 & 8 & 0.3490 & libras-move & 360 & 90 & 0.0667 \\\hline 
yeast[cyt] & 1484 & 8 & 0.3120 & seismic & 2584 & 24 & 0.0658 \\\hline 
german & 1000 & 24 & 0.3000 & thyroid-sick & 3772 & 52 & 0.0612 \\\hline 
svmguide3 & 1284 & 22 & 0.2625 & coil-2000 & 9822 & 85 & 0.0597 \\\hline 
vehicle[bus] & 846 & 18 & 0.2577 & arrhythmia[06] & 452 & 278 & 0.0553 \\\hline 
vehicle[saab] & 846 & 18 & 0.2565 & solar-flare[m0] & 1389 & 32 & 0.0490 \\\hline 
vehicle[van] & 846 & 18 & 0.2352 & oil & 937 & 49 & 0.0438 \\\hline 
spectf & 267 & 44 & 0.2060 & letter[a] & 20000 & 16 & 0.0394 \\\hline 
ecoli[imu] & 336 & 7 & 0.1042 & car-eval[4] & 1728 & 21 & 0.0376 \\\hline 
pen-digits[0] & 10992 & 16 & 0.1040 & wine-quality & 4898 & 11 & 0.0374 \\\hline 
page-blocks & 5473 & 10 & 0.1023 & letter[z] & 20000 & 16 & 0.0367 \\\hline 
opt-digits[0] & 5620 & 64 & 0.0986 & yeast[me2] & 1484 & 8 & 0.0344 \\\hline 
opt-digits[8] & 5620 & 64 & 0.0986 & yeast[me1] & 1484 & 8 & 0.0296 \\\hline 
satimage[4] & 6435 & 36 & 0.0973 & ozone-level & 2536 & 72 & 0.0288 \\\hline 
pen-digits[5] & 10992 & 16 & 0.0960 & w7a & 34780 & 300 & 0.0282 \\\hline 
abalone[7] & 4177 & 10 & 0.0936 & mammography & 11183 & 6 & 0.0232 \\\hline 
sick-euthyroid & 3163 & 42 & 0.0926 & abalone[19] & 4177 & 10 & 0.0077\\\hline
\end{tabular}
\label{tab:datasets}
\end{table}	
\end{center}

\begin{figure}
\centering
\begin{minipage}{0.45\textwidth}
\includegraphics[scale=.8]{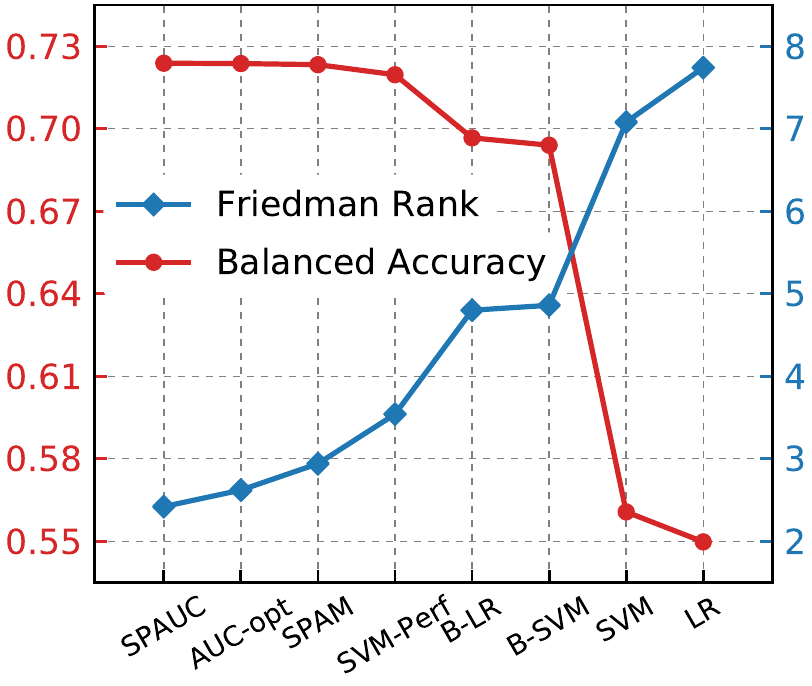}
\end{minipage}\qquad\qquad
\begin{minipage}{0.45\textwidth}
\includegraphics[scale=.8]{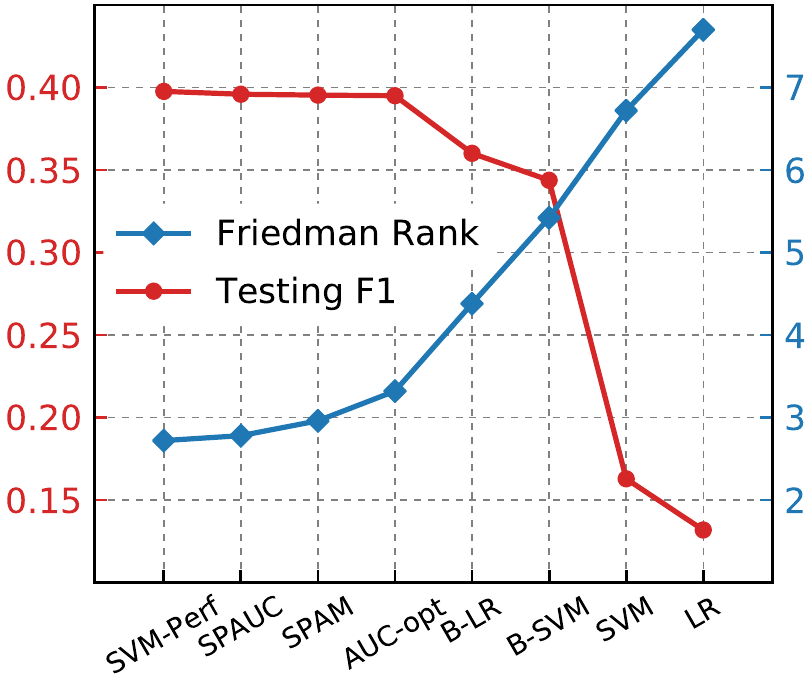}
\end{minipage}
\caption{Balanced accuracies (Left) and F1 scores (Right) as a function of classifiers on testing datasets.}
\label{fig:test-b-acc-score}
\end{figure}

\paragraph{Real-world datasets.} Table \ref{tab:datasets} lists 50 datasets where  they are ordered by imbalanced-ratio $\gamma:=n_p / n$.  All datasets are shuffling 200 times. For each time of the shuffling, 50\% of samples are used for training and the rest are used for testing. All dataset samples have been standardized. Specifically, for each dataset, training and testing samples are standardized by the mean and standard deviation of training samples.

\paragraph{Synthetic datasets using t-SNE.} We project 50 datasets onto both $\mathbb{R}^2$ and $\mathbb{R}^3$ using t-SNE \citep{maaten2008visualizing} so that class patterns are conserved. Projected points will be used as training and testing samples while keep labels unchanged. We use a sklearn implementation of t-SNE. The parameter of perplexity is set to 50 as suggested while all other parameters are set to default values. For the simulation datasets in $\mathbb{R}^3$, we restricted the number of training samples within $(n_+ n_- = 2000)$ to make our experiments fast.

\begin{figure}
\centering
\begin{minipage}{0.45\textwidth}
\includegraphics[scale=.8]{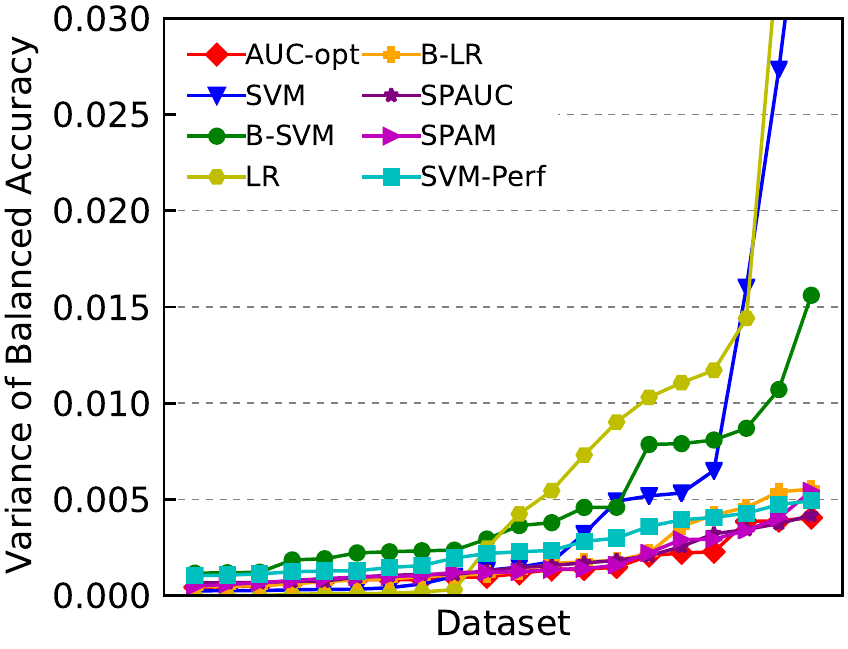}
\end{minipage}\qquad\qquad
\begin{minipage}{0.45\textwidth}
\includegraphics[scale=.8]{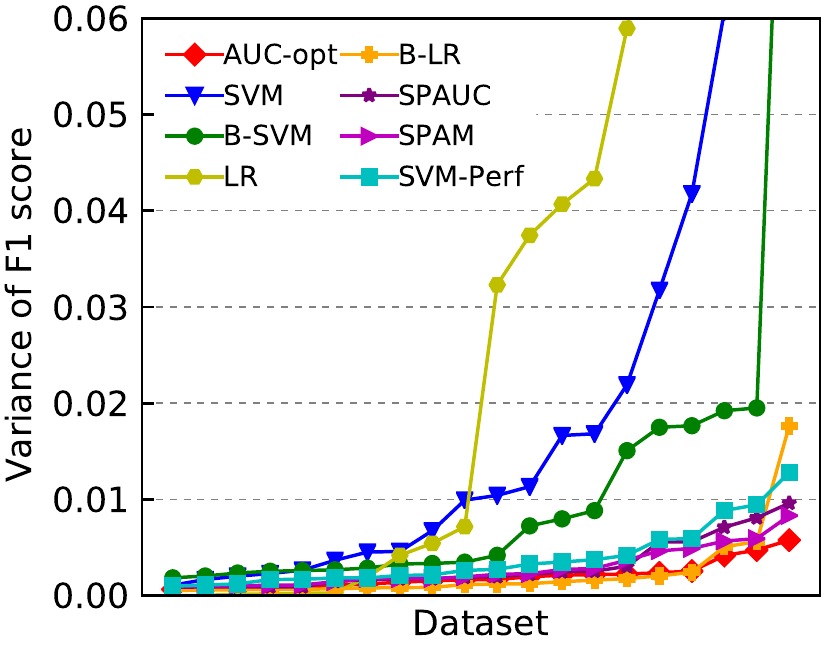}
\end{minipage}
\caption{Variance of balanced accuracies (Left) and F1 scores (Right) as a function of dataset on testing datasets.}
\label{fig:test-b-f1-score}
\end{figure}

\begin{center}
\centering
\def\arraystretch{1.05}
\begin{table}
\centering
\caption{The averaged run time of all classifiers on training stage over all 50 datasets in $\mathbb{R}^2$.}
\begin{tabular}{P{0.15\textwidth}|P{0.25\textwidth}|P{0.15\textwidth}|P{0.25\textwidth}}\toprule
Method & Run Time (mean$\pm$std) & Method & Run Time (mean$\pm$std)\\\hline 
SVM & 0.2186$\pm$0.2825 & RF & 2.9737$\pm$2.5581\\
B-SVM & 0.1830$\pm$0.1811 & B-RF & 2.8366$\pm$2.3977\\ 
LR & 0.1337$\pm$0.1318 & GB & 3.4728$\pm$6.4346\\
B-LR & 0.1345$\pm$0.1357 & SVM-RBF & 3.3854$\pm$17.070\\
SPAM & 3.1661$\pm$7.1141 & B-SVM-RBF & 2.1731$\pm$9.1415\\ 
SPAUC & 3.1996$\pm$7.1521 & RankBoost & 3.9690$\pm$10.169\\ 
SVM-Perf & 7.0858$\pm$9.3885 & AdaBoost & 2.1076$\pm$2.2173\\\bottomrule
\end{tabular}
\label{tab:run-time-real}
\end{table}
\end{center}

\section{More experimental results}
\label{section:experiments-more}
\paragraph{Variances of balanced accuracy and F1 score.} Fig. \ref{fig:test-b-acc-score} presents the balanced accuracy and F1-score on testing datasets. Fig. \ref{fig:test-b-f1-score} illustrates the variance of balanced accuracies and F1 scores of 8 linear methods over 50 t-SNE datasets. The overall performance of these two metrics are consistent with the variance of AUC scores. However, the LR and B-LR have higher variances on balanced accuracies and F1 scores than on AUC scores. This may suggest decision functions of LR and B-LR needs to be calibrated.

\paragraph{Comparisons of training AUC scores.} Table \ref{tab:auc-scores-real-tr} presents the comparison of training AUC scores over 200 trials of 50 real-world datasets. The performance is roughly consistent with results in the testing stage where the nonlinear methods are significantly better than linear methods. The balance weight strategy is clearly improving the linear methods such as SVM and LR but has no improvement for nonlinear methods.

\definecolor{Gray}{gray}{0.9}
\makeatletter

\newcolumntype{"}{@{\hskip\tabcolsep\vrule width .5pt\hskip\tabcolsep}}
\makeatother
\begin{table}[H]
\caption{The comparison of \textit{training} AUC scores over 200 trials of 50 real-world datasets. The value of each cell (I,J) means the number of datasets where method I is significantly better than method J by using a t-test with a significance level of 5\%. Numbers in red region are the results of AUC optimizers better than standard classifiers, while numbers in blue region are the reverse.}
\centering
\begin{tabular}{p{0.12\textwidth}P{0.05\textwidth}P{0.05\textwidth}P{0.05\textwidth}P{0.05\textwidth}P{0.05\textwidth}P{0.05\textwidth}P{0.05\textwidth}P{0.05\textwidth}P{0.05\textwidth}P{0.05\textwidth}P{0.05\textwidth}P{0.05\textwidth}P{0.05\textwidth}P{0.05\textwidth}P{0.08\textwidth}}\toprule 
& \rot{SVM} & \rot{B-SVM} & \rot{LR} & \rot{B-LR} & \rot{RF} & \rot{B-RF} & \rot{GB} & \rot{SVM-RBF} & \rot{\scriptsize{B-SVM-RBF}} & \rot{RankBoost} & \rot{AdaBoost} & \rot{SPAM} & \rot{SPAUC} & \rot{SVM-Perf} & \rot{\textbf{Average}} \\\hline 
SVM   & -  & 2  & 0  & 0  & 0  & 0  & 0  & 4  & 3  & \cellcolor{blue!30!white}0  & \cellcolor{blue!30!white}2  & \cellcolor{blue!30!white}26  & \cellcolor{blue!30!white}37  & \cellcolor{blue!30!white}4 & 6.0 \\
B-SVM   & 29  & -  & 6  & 0  & 0  & 0  & 0  & 4  & 3  & \cellcolor{blue!30!white}1  & \cellcolor{blue!30!white}2  & \cellcolor{blue!30!white}37  & \cellcolor{blue!30!white}42  & \cellcolor{blue!30!white}30 & 11.85 \\
 LR   & 41  & 23  & -  & 4  & 0  & 0  & 0  & 4  & 3  & \cellcolor{blue!30!white}1  & \cellcolor{blue!30!white}3  & \cellcolor{blue!30!white}43  & \cellcolor{blue!30!white}43  & \cellcolor{blue!30!white}43 & 16.0 \\
B-LR   & 42  & 31  & 13  & -  & 0  & 0  & 0  & 4  & 3  & \cellcolor{blue!30!white}1  & \cellcolor{blue!30!white}3  & \cellcolor{blue!30!white}43  & \cellcolor{blue!30!white}44  & \cellcolor{blue!30!white}43 & 17.46 \\
 RF   & 50  & 50  & 50  & 50  & -  & 5  & 32  & 47  & 42  & \cellcolor{blue!30!white}45  & \cellcolor{blue!30!white}48  & \cellcolor{blue!30!white}50  & \cellcolor{blue!30!white}49  & \cellcolor{blue!30!white}50 & 43.69 \\
B-RF   & 50  & 50  & 50  & 50  & 2  & -  & 32  & 46  & 41  & \cellcolor{blue!30!white}45  & \cellcolor{blue!30!white}48  & \cellcolor{blue!30!white}50  & \cellcolor{blue!30!white}49  & \cellcolor{blue!30!white}50 & 43.31 \\
 GB   & 50  & 50  & 50  & 50  & 1  & 1  & -  & 37  & 41  & \cellcolor{blue!30!white}46  & \cellcolor{blue!30!white}47  & \cellcolor{blue!30!white}50  & \cellcolor{blue!30!white}49  & \cellcolor{blue!30!white}50 & 40.15 \\
SVM-RBF   & 45  & 45  & 44  & 44  & 0  & 0  & 6  & -  & 15  & \cellcolor{blue!30!white}21  & \cellcolor{blue!30!white}30  & \cellcolor{blue!30!white}45  & \cellcolor{blue!30!white}44  & \cellcolor{blue!30!white}45 & 29.54 \\
B-SVM-RBF   & 47  & 47  & 46  & 46  & 0  & 0  & 3  & 13  & -  & \cellcolor{blue!30!white}17  & \cellcolor{blue!30!white}27  & \cellcolor{blue!30!white}47  & \cellcolor{blue!30!white}46  & \cellcolor{blue!30!white}47 & 29.69 \\
RankBoost   & \cellcolor{red!30!white}49  & \cellcolor{red!30!white}49  & \cellcolor{red!30!white}46  & \cellcolor{red!30!white}49  & \cellcolor{red!30!white}1  & \cellcolor{red!30!white}1  & \cellcolor{red!30!white}0  & \cellcolor{red!30!white}22  & \cellcolor{red!30!white}26  & -  & 41  & 50  & 49  & 50 & 33.31 \\
AdaBoost   & \cellcolor{red!30!white}48  & \cellcolor{red!30!white}46  & \cellcolor{red!30!white}45  & \cellcolor{red!30!white}43  & \cellcolor{red!30!white}0  & \cellcolor{red!30!white}0  & \cellcolor{red!30!white}0  & \cellcolor{red!30!white}11  & \cellcolor{red!30!white}17  & 1  & -  & 50  & 48  & 49 & 27.54 \\
SPAM   & \cellcolor{red!30!white}12  & \cellcolor{red!30!white}1  & \cellcolor{red!30!white}2  & \cellcolor{red!30!white}0  & \cellcolor{red!30!white}0  & \cellcolor{red!30!white}0  & \cellcolor{red!30!white}0  & \cellcolor{red!30!white}4  & \cellcolor{red!30!white}3  & 0  & 0  & -  & 35  & 15 & 5.54 \\
SPAUC   & \cellcolor{red!30!white}4  & \cellcolor{red!30!white}0  & \cellcolor{red!30!white}1  & \cellcolor{red!30!white}0  & \cellcolor{red!30!white}0  & \cellcolor{red!30!white}0  & \cellcolor{red!30!white}0  & \cellcolor{red!30!white}3  & \cellcolor{red!30!white}3  & 0  & 1  & 6  & -  & 6 & 1.85 \\
SVM-PL   & \cellcolor{red!30!white}6  & \cellcolor{red!30!white}8  & \cellcolor{red!30!white}0  & \cellcolor{red!30!white}0  & \cellcolor{red!30!white}0  & \cellcolor{red!30!white}0  & \cellcolor{red!30!white}0  & \cellcolor{red!30!white}2  & \cellcolor{red!30!white}1  & 0  & 1  & 25  & 37  & - & 6.15 \\\bottomrule
\end{tabular}
\label{tab:auc-scores-real-tr}
\end{table}

\paragraph{Averaged run time over all datasets.}

Table \ref{tab:run-time-real} shows the averaged run time over all 50 real-world datasets. Although different methods may be implemented by using different programming languages, it is clear that these nonlinear methods are much slower than linear methods. Furthermore, the AUC approximate optimizers are generally slower than standard classifiers. This is simply due to the loss of AUC is quadratic with respect to the number of training samples.

\end{document}